\title[PPT2]
{The PPT square conjecture holds generically for some classes of independent states}
\author {Beno\^{i}t Collins}
\address{Department of Mathematics, Graduate School of Science,
Kyoto University, Kyoto 606-8502, Japan}
\email{collins@math.kyoto-u.ac.jp}
\author {Zhi Yin}
\address{Institute of Advanced Study in Mathematics, Harbin Institute of Technology, Harbin 150006, China}
\email{hustyinzhi@163.com}
\author {Ping Zhong}
\address{Department of Pure Mathematics, University of Waterloo, 
200 University Avenue West, Waterloo, Ontario N2L 3G1, Canada}
\email{ping.zhong@uwaterloo.ca}
\theoremstyle{plain}
\newtheorem{lemma}{Lemma}[section]
\newtheorem{theorem}[lemma]{Theorem}
\newtheorem{proposition}[lemma]{Proposition}
\newtheorem{corollary}[lemma]{Corollary}
\theoremstyle{definition}
\theoremstyle{remark}
\newtheorem*{remark}{Remark}
\newcommand{\un}{1\mkern -4mu{\rm l}}
\def\>{\rangle}
\def\<{\langle}
\def\be{\begin{equation}}
\def\ee{\end{equation}}
\begin{document}

\begin{abstract}
Let $|\psi\rangle\langle \psi|$ be a random pure state on $\mathbb{C}^{d^2}\otimes \mathbb{C}^s$, where $\psi$ is a random unit vector uniformly distributed on the sphere in $\mathbb{C}^{d^2}\otimes \mathbb{C}^s$.
Let $\rho_1$ be random induced states $\rho_1=Tr_{\mathbb{C}^s}(|\psi\rangle\langle\psi |)$ whose distribution is $\mu_{d^2,s}$; and let $\rho_2$ be random induced states following the same distribution $\mu_{d^2,s}$ independent from $\rho_1$.
Let $\rho$ be a random state induced by the entanglement swapping of 
$\rho_1$ and $\rho_2$.
We show that the empirical spectrum of 
$\rho- \un/d^2$ 
converges almost surely to the Marcenko-Pastur law with parameter $c^2$
as $d\rightarrow \infty$ and  $s/d \rightarrow c$.
 As an application, we prove that the state $\rho$ is separable 
 generically if $\rho_1, \rho_2$ are PPT entangled.

\end{abstract}

\maketitle

\section{Introduction}

Quantum entanglement \cite{HHHH2007} is a resource which can be widely used in quantum information process. 
Assuming that Alice and Bob share an entangled state,
the quantum teleportation claims that 
Alice can teleport any unknown quantum state to Bob by using a classical channel, 
although it is known that it is impossible to create an ideal copy of an arbitrary 
quantum state by no-cloning theorem.
Entanglement swapping can create entanglement between systems which never interact,
and it is also the teleportation of entanglement of the maximally entangled states 
which is an  important technique for teleportation over long distances \cite{ZZHE1993}.

While the maximally entangled state plays an important role in typical teleportation protocols, there exists another 
important type of entangled states, called the bound entangled states \cite{HHH1998}, 
from which one can not distill maximally entangled states
under local operations and classical communication.
For example, states satisfying the positive partial transpose (PPT) property are bound entangled states \cite{HHH1998}.
Unfortunately, in any deterministic teleportation protocol, the performance of the bound entangled state is not better than the separable (classical) state as shown in \cite{HHH1998}.
In view of this, it is natural to wonder what happens to the entanglement swapping protocol with such states.

The so called PPT square conjecture first appeared in B\"{a}uml's thesis in the following \emph{state} form \cite{B2010}:
{\it assume that Alice and Charlie share bound entangled state and that Bob and Charlie also share bound entangled state;
then the state of Alice and Bob, conditioned on any measurement by Charlie, is separable.}
In other words, this conjecture suggests that 
the state obtained by entanglement swapping protocol of PPT entangled states is separable. 

Up to now, there are some evidences to support the PPT square conjecture \cite{BCHW2015, B2010}. 
For example, 
Hermes announced 
that this conjecture is true for the states on $\mathbb{C}^3 \otimes \mathbb{C}^3$ \cite{MM2017} recently. However, one main difficulty to study this conjecture is that we can not describe the set of all bound entangled states and the conjecture remains open.

In this article, we use two methods widely used in quantum information theory, random matrx theory (RMT) and asymptotic geometric analysis (AGA), to study
the PPT square conjecture. 
The RMT was heavily used in 
the non-additivity problem of quantum channels \cite{BH2010, CN2010, CN2011, CNY2011, FN2015, H2009, HW2008}, 
and AGA was used to estimate the geometric volume of quantum states with different properties \cite{A2012, ASY2014, CNY2011, SZ04, ZS01}. 

We outline our approach as follows. We consider two induced states on $\mathbb{C}^d \otimes \mathbb{C}^d$
chosen randomly with distribution $\mu_{d^2, s}$, where $s$ is the dimension of the environment. 
By Aubrun \cite{A2012} and Aubrun-Szerek-Ye's work \cite{ASY2014}, 
we can choose the parameters $s$ and $d$ properly, such that
the induced states chosen are PPT entangled generically.
In other words, the states are PPT entangled with high probability
as $d \rightarrow \infty$ and $s/d \rightarrow \infty$. 
Then we will study the separability of the state which is induced by the entanglement swapping protocol of the states. This is 
done when
these two states are chosen independently.
Our work is divided
 into two parts. 
 Firstly, we consider the model of entanglement swapping protocol of two Wishart matrices
 and then calculate the moments of the rescaled radon matrix model. 
 By using tools of RMT, 
we are able to obtain the large $N$ limits of this model. Secondly, by using AGA and our limit theorem, we show that the state is separable
with high probability as $d \rightarrow \infty$ and $s/d \rightarrow \infty$. 
In this sense, we prove that the PPT square conjecture holds generically if the states are chosen independently. 
Moreover, we also consider the random model where the two states are chosen to be the same. 
However, we can only get a weak version of limit theorem, hence 
we are not able 
to use AGA to describe the 
separability of the induced state directly.

The paper is organized as follows. After this introduction, we collect some 
relevant results from RMT in Section 2. We then introduce our random matrix models and 
prove some limit theorems in Section 3. 
We apply our results to PPT square conjecture in Section 4 
and end the paper with conclusions and some questions.

\section{Preliminaries}
In this section, we review some relevant results in combinatorics and graphic Gaussian calculus. 
\subsection{Some combinatorial facts}

Let $I$ be a linearly ordered set of $p$ elements and identify it with 
the interval $[p] = \{1, 2, \ldots, p\}$.
Denote by $S_I$ the set of permutations of elements in $I$. 
For convenience, we also denote by $S_p$ the set of of permutations of elements in $[p]$. 
Given a permutation $\sigma\in S_I$,
we denote by $|\sigma|$ the minimal number of transpositions that multiply to $\sigma$ and
by $\#\sigma$ the number of cycles of $\sigma$.
We have the the following equation
\begin{equation}\label{eq:permutation}
\#\sigma = |I|- |\sigma|.
\end{equation}
Let $d(\sigma,\tau)=|\sigma^{-1}\tau|$,
then we have
following triangle inequality:
\begin{equation}\label{eq:geo}
|\sigma^{-1} \tau| + |\tau^{-1} \pi| \geq |\sigma^{-1} \pi|.
\end{equation}
Hence it defines a distance on $S_I$. We also call $|\sigma|$ the length of $\sigma$.
If the equality in (\ref{eq:geo}) holds, 
we say $\sigma, \tau, \pi$ satisfy the geodesic condition
and denote it as $\sigma-\tau-\pi$. 
We denote the permutations from $S_p$ which lie on a geodesic from $id$ to 
the full cycle $\gamma:=(1,2,\cdots, p)$ by
\[
\begin{split}
S_{NC}(\gamma):&=\{ \pi\in S_p| |\pi|+|\pi^{-1}\gamma|=p-1 \}\\
   &=\{ \pi \in S_p: id-\pi -\gamma\}.
\end{split}
\]
For $\sigma, \pi\in S_{NC}(\gamma)$, we say that $\sigma\leq \pi$ if $\sigma$ and $\pi$ lie on the same geodesic and $\sigma$ comes before $\pi$. The set $S_{NC}(\gamma)$
endowed with $\say{\leq}$ becomes a poset. 
We refer the reader to \cite{NS2006}
for more details.

We call $\pi =\{V_1, \ldots, V_r\} $ a partition of the set $[p]$ if the sets $V_i$ ($i=1, \ldots, r$) are pairwise disjoint, non-empty subsets of $[p]$ such that $V_1 \cup \cdots \cup V_r =[p].$ We use $\#\pi$ to denote the number of blocks of $\pi.$ Given two elements $a, b \in [p]$, 
we write $a \sim_\pi b$ if $a$ and $b$ belong to the same block of $\pi.$
A partition $\pi$ is called crossing if there exist $a_1< b_1 < a_2 < b_2 \in [p]$ such that $a_1 \sim_\pi a_2 \not\sim_\pi b_1 \sim_\pi b_2.$ We call $\pi$ \emph{non-crossing} partition if $\pi$ is not crossing. Denote
by $NC(p)$ the set of all non-crossing partitions of $[p]$. 
We also denote by $NC(I)$ the set of all non-crossing partitions of 
the linearly ordered set $I$. 

A partition can be naturally identified with a permutation. 
We will use the following identification of non-crossing partitions due to Biane \cite{B1997} (see also \cite[Lecture 23]{NS2006}). 

\begin{lemma}\label{lem: 0}
Let $\gamma = (1, 2, \ldots, p)$ be the full cycle of $S_p$. 
There is a bijection between $NC(p)$ and the set $S_{NC}(\gamma)$ which preserves 
the poset structure. 
\end{lemma}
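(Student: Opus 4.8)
The plan is to make the correspondence explicit and then verify bijectivity and order-compatibility by elementary length computations. Given a non-crossing partition $\pi = \{V_1, \ldots, V_r\} \in NC(p)$, I associate to it the permutation $\sigma_\pi \in S_p$ whose cycles are precisely the blocks $V_i$, each block $V_i = \{a_1 < a_2 < \cdots < a_k\}$ being turned into the cycle $(a_1, a_2, \ldots, a_k)$ (so $a_j \mapsto a_{j+1}$ cyclically). The first task is to check that $\sigma_\pi \in S_{NC}(\gamma)$, i.e. $|\sigma_\pi| + |\sigma_\pi^{-1}\gamma| = p - 1$. By \eqref{eq:permutation} this is equivalent to $\#\sigma_\pi + \#(\sigma_\pi^{-1}\gamma) = p + 1$, and since $\#\sigma_\pi = r$, it remains to show $\#(\sigma_\pi^{-1}\gamma) = p + 1 - r$. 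I would prove this by induction on $p$: a non-crossing partition always has an interval block $\{a, a+1, \ldots, a+\ell\}$, and removing it reduces $p$ and $r$ compatibly; alternatively one invokes the standard genus computation for the factorization $\sigma_\pi \cdot (\sigma_\pi^{-1}\gamma) = \gamma$.

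For injectivity, observe that $\pi$ is recovered from $\sigma_\pi$ as its cycle partition, so $\pi \mapsto \sigma_\pi$ is injective. For surjectivity, I show conversely that if $\sigma \in S_{NC}(\gamma)$ then the partition $\pi_\sigma$ of $[p]$ into the cycles of $\sigma$ is non-crossing and $\sigma = \sigma_{\pi_\sigma}$. The non-crossing property follows because a crossing among cycles of $\sigma$ forces, via the triangle inequality \eqref{eq:geo} applied to a suitably chosen transposition, the strict inequality $|\sigma| + |\sigma^{-1}\gamma| > p - 1$; and the fact that each cycle traverses its block in increasing order (so $\sigma$ really equals $\sigma_{\pi_\sigma}$, not some other permutation with the same cycle partition) again follows from the geodesic condition, since a non-monotone cycle can be shortened. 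Matching cardinalities $|NC(p)| = |S_{NC}(\gamma)|$ (both Catalan numbers) provides a consistency check.

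For the poset structure, I would show that for $\pi, \pi' \in NC(p)$, one has $\pi \leq \pi'$ (every block of $\pi$ contained in a block of $\pi'$) if and only if $\sigma_\pi \leq \sigma_{\pi'}$ in $S_{NC}(\gamma)$, i.e. $id - \sigma_\pi - \sigma_{\pi'} - \gamma$. The key identity is $|\sigma_\pi^{-1}\sigma_{\pi'}| = |\sigma_{\pi'}| - |\sigma_\pi|$ precisely when $\pi$ refines $\pi'$: this is checked blockwise, since inside each block of $\pi'$ the blocks of $\pi$ form a non-crossing partition and the computation reduces to the full-cycle case already treated in the first paragraph. This is exactly $id - \sigma_\pi - \sigma_{\pi'}$, and $\sigma_\pi - \sigma_{\pi'} - \gamma$ then follows by combining $id - \sigma_\pi - \gamma$, $id - \sigma_{\pi'} - \gamma$, and the refinement, using \eqref{eq:geo}.

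The main obstacle I expect is the surjectivity step, specifically pinning down that the geodesic condition forces the cycles to be order-respecting: it is intuitively clear but needs a careful argument ruling out permutations whose cycle partition is non-crossing yet whose cycles traverse their blocks non-monotonically. Organizing this cleanly — ideally by the inductive interval-block peeling, which simultaneously delivers the cycle count, the order-respecting property, and the poset claim — is where the real work lies; the remainder is bookkeeping with \eqref{eq:permutation} and \eqref{eq:geo}.
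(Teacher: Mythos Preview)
The paper does not actually prove this lemma: it is stated as a known result due to Biane \cite{B1997}, with a pointer to \cite[Lecture 23]{NS2006} for details, and no argument is given in the paper itself. So there is no proof in the paper to compare against.

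That said, your outline is the standard one and is essentially how the result is established in the cited references. The map $\pi \mapsto \sigma_\pi$ you describe is exactly Biane's bijection, the identity $\#\sigma_\pi + \#(\sigma_\pi^{-1}\gamma) = p+1$ is the Kreweras-complement relation, and your diagnosis of the delicate point is accurate: the nontrivial direction is that every $\sigma \in S_{NC}(\gamma)$ has cycles that are both non-crossing as a set partition \emph{and} traverse each block in the cyclic order induced by $\gamma$. Your interval-block induction is a clean way to handle this; the references argue via the genus formula and the observation that equality in \eqref{eq:geo} forces each cycle of $\sigma$ to be a sub-cycle of $\gamma$ restricted to that block. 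Either route works, and your plan for the poset compatibility via the blockwise reduction is correct.
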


We end this subsection by the following two technical lemmas. 

\begin{lemma}\label{lem: 1}
Denote by $NC^0(p) = \{\pi \in NC(p): \pi \; \text{has no singletons}\}$.
We then have 
\begin{equation}\label{eq: lem1}
\begin{split}
\sum_{I \subset [p]} (-1)^{p-|I|}  \sum_{ \pi \in NC (I)} c^{p-2|I|+2(\# \pi)}
 =  \sum_{\pi \in NC^0(p)} c^{2(\# \pi)-p}.
\end{split}
\end{equation}
\end{lemma}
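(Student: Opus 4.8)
The plan is to expand the left-hand side by interchanging the two sums and reorganizing the sum over subsets $I$ according to which elements of $[p]$ are \emph{omitted}, i.e.\ according to the complement $J = [p] \setminus I$. For a fixed non-crossing partition of a subset $I$, the sign $(-1)^{p-|I|}$ depends only on $|J|$, so the real content is a cancellation statement: all the terms indexed by partitions that have at least one singleton must cancel, leaving exactly $\sum_{\pi \in NC^0(p)} c^{2(\#\pi)-p}$. First I would record that the exponent matches: if $\pi \in NC(I)$ with $|I| = p - |J|$, then $p - 2|I| + 2(\#\pi) = 2(\#\pi) - p + 2|J|$, which for $J = \varnothing$ is exactly $2(\#\pi) - p$, the exponent on the right. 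So the $I = [p]$ term already contributes $\sum_{\pi \in NC(p)} c^{2(\#\pi)-p}$, and the task is to show that the remaining terms (those with $J \neq \varnothing$) exactly cancel the contribution of partitions in $NC(p) \setminus NC^0(p)$, i.e.\ those with at least one singleton.

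The key step is a sign-reversing involution, or equivalently an inclusion--exclusion bookkeeping. Given $\pi \in NC(p)$ with a nonempty set $\mathrm{Sing}(\pi)$ of singleton blocks, and given any subset $S \subseteq \mathrm{Sing}(\pi)$, deleting the points of $S$ yields a non-crossing partition $\pi|_{[p]\setminus S}$ of the set $I = [p]\setminus S$ with $\#(\pi|_{[p]\setminus S}) = \#\pi - |S|$, hence contributing $c^{2(\#\pi - |S|) - p + 2|S|} = c^{2(\#\pi)-p}$ — the exponent is independent of $S$. Moreover every $\pi' \in NC(I)$ arises this way from a unique $\pi \in NC(p)$ together with $S \subseteq \mathrm{Sing}(\pi)$: namely $\pi$ is obtained from $\pi'$ by adjoining the points of $S$ as new singletons (this is automatically non-crossing, and $S$ must be contained in the singletons of the resulting $\pi$; conversely, given $\pi'$ and any choice of where to re-insert the deleted points, the insertion is forced once we insist they become singletons). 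Therefore, collecting terms by the pair $(\pi, S)$ with $\pi \in NC(p)$ and $S \subseteq \mathrm{Sing}(\pi)$, the left-hand side becomes
\[
\sum_{\pi \in NC(p)} c^{2(\#\pi)-p} \sum_{S \subseteq \mathrm{Sing}(\pi)} (-1)^{|S|}
= \sum_{\pi \in NC(p)} c^{2(\#\pi)-p} \bigl(1-1\bigr)^{|\mathrm{Sing}(\pi)|}.
\]
The inner sum $(1-1)^{|\mathrm{Sing}(\pi)|}$ vanishes whenever $\mathrm{Sing}(\pi) \neq \varnothing$ and equals $1$ exactly when $\pi$ has no singletons, i.e.\ $\pi \in NC^0(p)$, which is precisely the right-hand side.

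The main obstacle is the combinatorial verification underlying the reindexing: one must check carefully that the map $(\pi, S) \mapsto \pi|_{[p]\setminus S} \in NC(I)$, with $I = [p]\setminus S$, is a genuine bijection onto $\bigcup_{I \subseteq [p]} NC(I)$ (with $I$ allowed to vary), that it respects the block count as claimed, and — the subtle point — that non-crossingness is preserved in both directions when singletons are removed or re-adjoined. Removing points from a non-crossing partition clearly keeps it non-crossing; the reverse direction needs the observation that adding a new point as a \emph{singleton} to a non-crossing partition of a linearly ordered set can never create a crossing, regardless of where the point is inserted in the order. Once these structural facts are in place, the binomial identity $\sum_{S}(-1)^{|S|} = 0^{|\mathrm{Sing}(\pi)|}$ finishes the proof with no further computation. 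I would also double-check the edge case $\mathrm{Sing}(\pi) = \varnothing$, where the only admissible $S$ is $\varnothing$ and the term survives with coefficient $+1$, as required.
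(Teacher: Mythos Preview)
Your proof is correct and is essentially the same inclusion--exclusion argument as the paper's: both reindex the double sum by separating singletons from the rest, observe that the exponent of $c$ is invariant under adding or removing singletons, and then invoke the binomial identity $\sum (-1)^{|\cdot|}=0$ to kill all terms except the singleton-free ones. The only cosmetic difference is the anchor of the reindexing---the paper groups terms by the singleton-free core $\pi'\in NC^0(J)$ and sums over supersets $J\subseteq I\subseteq [p]$, whereas you group by the full extension $\pi\in NC(p)$ and sum over subsets $S\subseteq\mathrm{Sing}(\pi)$; these two descriptions are in bijection and the cancellation mechanism is identical.
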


\begin{proof}
It suffices to prove that only the terms with $I = [p]$ and $\pi \in NC[p]$ without singletons in the above sum survive. 
Given a subset $J\subset I$ and a non-crossing partition $\pi' \in NC^0(J)$ having no
singletons, we denote by $[\pi']$ the set of non-crossing partitions in $NC(I)$ which extend $\pi'$ by adding singletons:
$$[\pi'] = \{\pi \in NC(I),   \pi = \pi' \cup \big\{ \text{singletons in}\;  I \backslash J \} \big\}.$$
 Noting the fact that for $\pi \in [\pi']$, we have $\# \pi = \# \pi' + |I| -|J|$ and
 every non-crossing partition $\pi\in NC(I)$ except $\{ \{1\}, \{2\}, \cdots, \{p\} \}$
 can be decomposed as $\pi=\pi' \cup \{ \text{singletons}\}$ by removing singletons from $\pi$. We have
\begin{equation*}
\begin{split}
\text{LHS of \eqref{eq: lem1}} &= \sum_{J \subset [p]}  \sum_{J \subset I \subset [p]} (-1)^{p-|I|}
 \sum_{\substack{\pi \in NC(I) \cap [\pi']\\ \pi'\in NC^0(J)}} c^{p-2|I| +2 \# \pi}\\
& =  \sum_{J \subset [p]} \sum_{\pi' \in NC^0(J)} c^{p-2|J| +2 \# \pi'} \sum_{J \subset I \subset [p]} (-1)^{p-|I|}\\
& =  \sum_{J \subset [p]} \sum_{\pi' \in NC^0(J)} c^{p-2|J| +2 \# \pi'} \delta_{|J|, p} = \text{ RHS of \eqref{eq: lem1}},
\end{split}
\end{equation*}
where we have used the fact that $\sum_{J \subset I \subset [p]} (-1)^{p-|I|} =0$
if $|J|\neq p$.
\end{proof}

\begin{lemma}\label{lem: 2}
Let $\gamma = (1, 2, \ldots, p)$ be the full cycle of $S_p.$ Let $\delta = \gamma \oplus \gamma \in S_{2p}$ and $\beta \in S_{2p}$ such that $\beta(i) = \beta(i+p), i=1, \ldots, p.$ Then for any  $\pi \in 
\{\pi \in S_{2p}: id-\pi -\delta\},$ $\pi$ can be decomposed into $\pi_1 \oplus \pi_2,$ where $id-\pi_1, \pi_2 - \gamma.$ Moreover, we have   
$$\# (\beta^{-1} \pi) = \# (\pi_1 \pi_2).$$
\end{lemma}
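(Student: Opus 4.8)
The plan is to analyze the structure of a permutation $\pi \in S_{2p}$ lying on a geodesic from $id$ to $\delta = \gamma \oplus \gamma$, where $\gamma = (1,2,\dots,p)$ acts on $\{1,\dots,p\}$ and the second copy of $\gamma$ acts on $\{p+1,\dots,2p\}$. First I would invoke the standard fact (a consequence of the triangle inequality \eqref{eq:geo} and the cycle-counting identity \eqref{eq:permutation}) that if $id - \pi - \delta$ and $\delta$ splits as a disjoint union of permutations on complementary blocks $B_1 = \{1,\dots,p\}$ and $B_2 = \{p+1,\dots,2p\}$, then $\pi$ must preserve each block. Indeed, the length is additive under disjoint unions, so $|\pi| + |\pi^{-1}\delta| = |B_1| + |B_2| - \#\pi - \#(\pi^{-1}\delta) \cdot(\text{sign bookkeeping})$; more cleanly, $id \le \pi \le \delta$ in the partial order forces the blocks of the partition underlying $\pi$ (via Biane's identification, Lemma~\ref{lem: 0}) to refine the orbit partition of $\delta$, which has blocks $B_1, B_2$. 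Hence $\pi = \pi_1 \oplus \pi_2$ with $\pi_i \in S_{B_i}$, and the geodesic condition $id - \pi - \delta$ localizes to each block, giving $id - \pi_1 - \gamma$ on $B_1$ and $id - \pi_2 - \gamma$ on $B_2$ (identifying $B_2$ with $[p]$ via $i \mapsto i - p$). This is the first assertion. Note the statement writes ``$id - \pi_1,\ \pi_2 - \gamma$'' which I read as shorthand for $id - \pi_1 - \gamma$ and $id - \pi_2 - \gamma$ after the natural relabeling.

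The second and more substantial part is the identity $\#(\beta^{-1}\pi) = \#(\pi_1 \pi_2)$, where $\beta$ is the involution pairing $i$ with $i+p$ for $i = 1,\dots,p$ (so $\beta = (1,p+1)(2,p+2)\cdots(p,2p)$). The idea is to compute the cycle structure of $\beta^{-1}\pi = \beta\pi$ directly. Write each element of $\{1,\dots,2p\}$ as either $i$ or $\bar i := i+p$ for $i \in [p]$. Then $\pi$ sends $i \mapsto \pi_1(i)$ and $\bar i \mapsto \overline{\pi_2(i)}$, while $\beta$ swaps $i \leftrightarrow \bar i$. So following a trajectory under $\beta\pi$: starting from $i \in [p]$, we get $i \xrightarrow{\pi} \pi_1(i) \xrightarrow{\beta} \overline{\pi_1(i)} \xrightarrow{\pi} \overline{\pi_2(\pi_1(i))} \xrightarrow{\beta} \pi_2(\pi_1(i)) \xrightarrow{\pi} \pi_1(\pi_2\pi_1(i)) \xrightarrow{\beta}\cdots$. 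I would track the "unbarred" subsequence: after every two applications of $\beta\pi$ we move $i \mapsto \pi_2(\pi_1(i)) = (\pi_2\pi_1)(i)$ within $[p]$, and the barred elements visited in between are exactly $\overline{(\pi_1 \pi_2\pi_1 \cdots)(\cdot)}$. The key observation is that a cycle of $\beta\pi$ visits unbarred elements forming a single cycle of $\pi_2\pi_1$ and barred elements forming a single cycle of a conjugate of $\pi_1\pi_2$, and these are in bijection; consequently each cycle of $\pi_1\pi_2$ (equivalently of $\pi_2\pi_1$, which has the same cycle type) corresponds to exactly one cycle of $\beta\pi$ of twice the length, yielding $\#(\beta\pi) = \#(\pi_2\pi_1) = \#(\pi_1\pi_2)$.

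The main obstacle I anticipate is the careful bookkeeping in this last bijection: one must check that distinct cycles of $\pi_1\pi_2$ do not merge into a single cycle of $\beta\pi$ and that a single cycle does not split, i.e. that the unbarred elements appearing in one $\beta\pi$-cycle form a full $\pi_2\pi_1$-cycle and not a union of several. This follows because the map "apply $\beta\pi$ twice" restricted to the unbarred elements is precisely $\pi_2\pi_1$ (a genuine permutation of $[p]$), so the unbarred part of any $\beta\pi$-orbit is automatically a single orbit of $\pi_2\pi_1$; no merging or splitting can occur. Once this is spelled out, the equality of cycle counts is immediate, and the fact that $\pi_1\pi_2$ and $\pi_2\pi_1$ are conjugate (via $\pi_1$) gives $\#(\pi_1\pi_2) = \#(\pi_2\pi_1)$, completing the proof. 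I would present the block-preservation part briefly, since it is a routine application of Biane's correspondence, and devote most of the write-up to making the cycle-tracking argument precise.
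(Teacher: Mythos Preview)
Your proposal is correct and follows essentially the same route as the paper: both use Biane's identification (Lemma~\ref{lem: 0}) to obtain the block decomposition $\pi=\pi_1\oplus\pi_2$, and both prove the cycle identity by observing that $\beta^{-1}\pi$ swaps the two blocks so that $(\beta^{-1}\pi)^2$ restricted to one block is $\pi_1\pi_2$ (the paper) or $\pi_2\pi_1$ (your version), giving a bijection between cycles. The only cosmetic difference is that the paper tracks the barred block and lands directly on $\pi_1\pi_2$, while you track the unbarred block and use the conjugacy $\#(\pi_2\pi_1)=\#(\pi_1\pi_2)$ at the end; your explicit discussion of why cycles neither merge nor split is a welcome clarification of a step the paper leaves implicit.
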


\begin{proof}
We shall use the identification between non-crossing partitions and geodesic permutations (see Lemma \ref{lem: 0}). 
The non-crossing partition $\pi\leq \{ \{1,2,\ldots, p \}, \{p+1,p+2,\ldots, 2p \} \}\subset NC(2p)$, thus 
$\pi$ can be decomposed into $\pi=\pi_1\oplus\pi_2 $, where $\pi_1\in NC(p)$ and $\pi_2$ is a non-crossing partition 
of the set $\{ p+1, p+2,\ldots, 2p\}$.

Let $V$ be a cycle of $\beta^{-1}\pi= \beta^{-1}( \pi_1\oplus\pi_2)$.
Since $\beta(\{1,2,\ldots, p\})=\{ p+1, p+2,\ldots, 2p\}$
and $ \{ p+1, p+2,\ldots, 2p\} $ is invariant under $(\beta^{-1}\pi)^2$, we see $V$ must intersect with $\{p+1, p+2,\ldots, 2p \}$.
Now let $1\leq a\leq p$, we have $\beta^{-1}( \pi_1\oplus\pi_2)(p+a)=\beta^{-1} (p+\pi_2(a))=\pi_2(a)$
and $ \beta^{-1}( \pi_1\oplus\pi_2)(\pi_2(a))=\beta^{-1} ( \pi_1(\pi_2(a)))=p+\pi_1(\pi_2(a)) $.
Hence, by identifying $\{p+1, p+2,\ldots, 2p\}$ with $\{1,2,\ldots, p \}$,
the action of $ (\beta^{-1}\pi)^2$ is the same as the action of $\pi_1\pi_2$. In particular, we have
$ \#(\beta^{-1}\pi)=\#(\pi_1\pi_2)$.
\end{proof}

\subsection{Graphical Gaussian calculus}

In \cite{CN2010, CN2011},  
The first named author and Nechita introduced a graphical formulation of the Weingarten calculus, 
which is very useful to evaluate moments of the output of the quantum channel we are interested in. 
Let us briefly review the main ideas and refer the reader to the original article for details.

A diagram is a collection of boxes with certain decorations and possibly wires, which connect
the boxes along their decorations according to some rules. Each decoration can be either filled (black) or empty (white), 
which corresponds to vector spaces or their dual spaces. And each wire connecting shapes attached to boxes corresponds 
to a tensor of a vector space with its dual, which produces a partial trace operation.
A diagram consisting of boxes and wires corresponds is denoted by $\mathcal{D}$.

In Figure \ref{fig1}, we give some simple examples of diagrams.
In Figure \ref{fig1} (a), the matrix $G \in M_{nk}$ is represented as a box with two decorations,
the round one stands for the $n$-dimensional Hilbert space and the square one stands for the $k$-dimensional part.
The wire connecting the round decorations stands for tracing over the $n$-dimensional part,
where we identify $\mathbb{C}^{nk}\cong \mathbb{C}^n\otimes \mathbb{C}^k$.

After taking partial trace, Figure \ref{fig1} (a) ends with a matrix in $M_k$.
The diagram in Figure \ref{fig1} (b) represents
the (non-normalized) maximally entangled state
and Figure \ref{fig1} (c) shows
the equivalence of two diagrams corresponding to $G^*$ and $\overline{G}$ respectively.
  \begin{figure}
  \centering
  \includegraphics[height=0.5\textwidth]{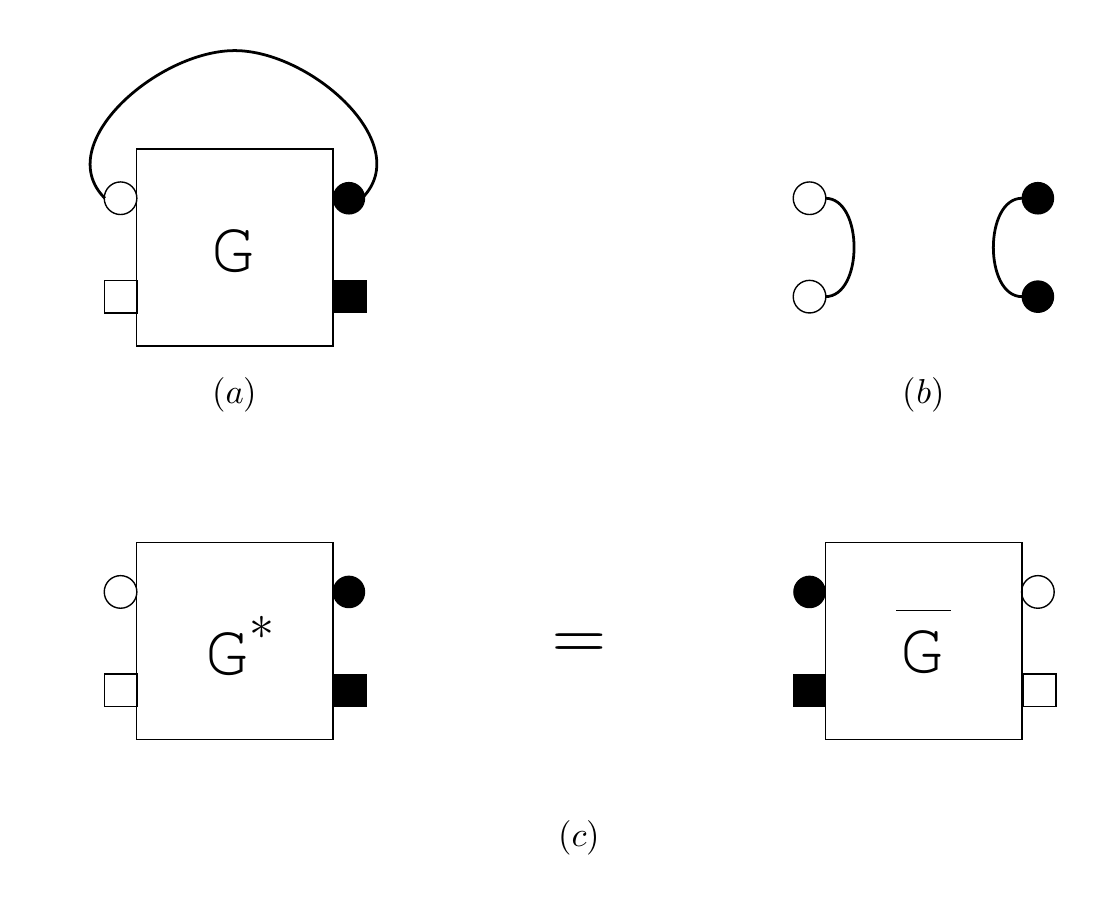}
  \caption{diagram (a) for $Tr_n (G)$, (b) for the non-normalized maximally entangled state, and (c) for the identification between $G^*$
  and $\overline{G}$.}
  \label{fig1}
  \end{figure}

Let us now describe very briefly how to compute expectation values of diagrams containing boxes of $G$ and $\overline{G}$,
where $G$ is a matrix whose entries are independent random variables with the same distribution $N(0, 1)$. 
We fill each box a matrix $G$, or its relative $ \overline{G}$.
To emphasize that $G$ is involved in the diagram, we use $\mathcal{D}(G)$ to represent such diagrams.
Given a permutation $\pi$, a wire will be added to connect a decoration labeling white (resp. black) of
the box $G$ having index $i$, with the same decoration labeling black (resp. white) of the box $\overline{G}$
having index $\pi(i)$.
We must add enough wires so that every decoration labeling white must
be paired with the same decoration labeling black.

The operation gives us a new diagram $\mathcal{D}(G)_{\pi}$, which is called a removal of the original diagram.
Using this operation and the Wick formula, one can describe the expectation $\mathbb{E}(\mathcal{D})$ of diagrams $\mathcal{D}$ as the following equation \cite[Theorem 3.2]{CN2012} formally:
\begin{equation*}
\mathbb{E}(\mathcal{D}(G))=\sum_{\pi}\mathcal{D}(G)_{\pi}.
\end{equation*}

In order to calculate $\mathcal{D}(G)_{\pi},$ we have to count the contributions for each $\pi$ of every decoration.
For instance, suppose there are two kinds of decorations, say
$\includegraphics[scale=0.14]{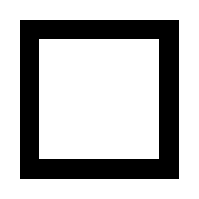}$
and
$\includegraphics[scale=0.12]{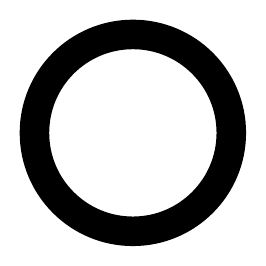}$
and their corresponding dimensions are $n$ and $k$ respectively. Thus for each $\pi,$
\begin{equation*}
\mathcal{D}(G)_{\pi} = n^{\#\includegraphics[scale=0.1]{square_w.pdf}}
k^{\# \includegraphics[scale=0.08]{circle_w.pdf}},
\end{equation*}
where $\# \includegraphics[scale=0.15]{square_w.pdf}$
(resp. $\# \includegraphics[scale=0.12]{circle_w.pdf}$)
denotes the number of the loops which connects the $\includegraphics[scale=0.15]{square_w.pdf}$ (resp.
$\includegraphics[scale=0.12]{circle_w.pdf}$) decorations.

In this paper, our random matrix models are related to the Wishart matrix $W \in W(n,s)$,
which is an $n \times n$ random matrix of the form $W= GG^\ast,$ where $G$ is a $n \times s$ 
random matrix whose entries are independent identically distributed random variables with the standard complex 
Gaussian distribution.
We will calculate the moments of certain random matrix models
using the graphical calculus.

\section{Entanglement swapping process of two Wishart matrices}
Let $W_1, W_2 \in W(d_1d_2, s)$ be two Wishart matrices with the same parameters $(d_1d_2, s)$, we would like to analyze the spectrum distribution of $d_2^2 \times d_2^2$ matrix $W$ which is obtained by the following 
\emph{entanglement swapping}
process of 
$W_1$ and $W_2:$ 
\begin{equation}\label{eqn:004}
W = \frac{1}{d_1}Tr_{d_1} \left[( W_1 \otimes W_2) \; P_{d_1} \right],
\end{equation}
where $Tr_{d_1}$ is the partial trace and $P_{d_1}$ is the Bell projection on the 
two parties with dimension $d_1$.
More precisely, 
let $H_i=\mathbb{C}^{d_i}\otimes\mathbb{C}^{d_i}$ and
we identify a $d_1^2d_2^2 \times d_1^2d_2^2$ matrix 
with an operator on $B(H_1) \otimes B(H_2)$. 
Hence for any 
$T =\sum_{i,j=1}^{d_1^2} E_{ij}\otimes T_{ij} \in B(H_1) \otimes B(H_2),$ 
$Tr_{d_1} (T) = \sum_{j=1}^{d_1^2} T_{jj}$,
where $E_{ij}$ denotes the elementary matrix with $1$ at the $(i,j)$-entry and $0$ at other entries. Using Dirac notation, the Bell vector is $|\phi\rangle=\frac{1}{\sqrt{d_1}}\left(|1\rangle \otimes|1\rangle +\cdots |d_1\rangle\otimes |d_1\rangle\right)$, and the Bell projection is
$P_{d_1} =d_1|\phi\rangle\langle\phi|= \sum_{i,j=1}^{d_1} |i\rangle\langle j|\otimes |i\rangle\langle j|$,
where $\{ e_1, \cdots, e_{d_1}\}$ is an orthonormal basis of $\mathbb{C}^{d_1}$.

\subsection{Moment formulas of the random matrix $W$}

\begin{proposition}\label{prop:Moments}
The moments of $W$ are given by the following formula.
\begin{enumerate}
\item  Case I: If $W_1$ and $W_2$ are chosen independently,
\begin{equation}\label{eq: Moments-indep1}
\mathbb{E} Tr [W^p] = \sum_{\pi_1, \pi_2 \in S_p} d_2^{\# (\gamma^{-1} \pi_1) + \# (\gamma^{-1} \pi_2)}  s^{\# (\pi_1) + \# (\pi_2)} d_1^{\#(\pi_1^{-1} \pi_2)-p},
\end{equation}
where $\gamma  = (1, 2, \ldots, p)$ is the full cycle of $S_p.$
\\
\item Case II: If $W_1=W_2,$ 
\begin{equation}\label{eq: Moments-equal1}
\mathbb{E} Tr [W^p] = \sum_{\pi \in S_{2p}} d_2^{\# (\delta^{-1} \pi)}  s^{\# (\pi)} d_1^{\#(\beta^{-1} \pi)-p},
\end{equation}
where $\delta = \gamma \oplus \gamma \in S_{2p} $,
$\gamma=(1,2,\cdots, p)$
 and $\beta \in S_{2p}$ is defined as $\beta(i) = \beta (i+p).$
\end{enumerate}
\end{proposition}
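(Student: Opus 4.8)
The plan is to compute $\mathbb{E}\,\Tr[W^p]$ directly from the graphical Weingarten calculus set up in Section 2.2, treating the two cases in parallel. First I would draw the diagram for $W$: starting from the definition \eqref{eqn:004}, $W$ is built from two Wishart boxes $W_i = G_iG_i^\ast$ (each $G_i$ a $d_1d_2\times s$ Gaussian matrix with round decorations of dimensions $d_1$, $d_2$ and a square decoration of dimension $s$), tensored together, composed with the Bell projection $P_{d_1}=\sum_{i,j}|i\rangle\langle j|\otimes|i\rangle\langle j|$ on the two $d_1$-legs, and then partially traced over one $d_1$-system with the $1/d_1$ normalization. Stacking $p$ copies of this diagram in a trace produces the diagram for $\Tr[W^p]$, in which the $d_1$-wires are already fixed by the Bell projections and the $1/d_1^p$ prefactor, the $d_2$-wires are fixed by the matrix product $W^p$ (these will contribute the full cycle $\gamma=(1,2,\ldots,p)$), and the $s$-legs together with the remaining free $d_1$-legs of the $G$-boxes are what get summed over by Wick/Weingarten.

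Next I would apply the removal formula $\mathbb{E}(\mathcal{D}(G))=\sum_\pi \mathcal{D}(G)_\pi$. In Case I the two Gaussian matrices $G_1,G_2$ are independent, so each contributes its own permutation: $\pi_1\in S_p$ pairing the $p$ copies of $G_1$ with the $p$ copies of $\overline{G_1}$, and $\pi_2\in S_p$ likewise for $G_2$. For a fixed pair $(\pi_1,\pi_2)$ one counts loops in each of the three decoration types. The $s$-decoration of the $i$-th block is a free leg, so it contributes $s^{\#\pi_i}$ from each matrix, giving $s^{\#\pi_1+\#\pi_2}$. The $d_2$-decorations get connected both by the trace structure of $W^p$ (the cycle $\gamma$) and by the removal permutation, producing loop counts $\#(\gamma^{-1}\pi_1)$ and $\#(\gamma^{-1}\pi_2)$ — here it is convenient to use $\gamma^{-1}$ versus $\gamma$ by relabelling, and the precise orientation is exactly the kind of bookkeeping I would do carefully on the diagram. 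Finally the $d_1$-decorations: because the Bell projection identifies the $d_1$-leg of $W_1$ with that of $W_2$ at each of the $p$ sites and the partial trace closes them into a single chain, the two removal permutations act on a common $d_1$-index set, and the number of resulting loops is $\#(\pi_1^{-1}\pi_2)$; combined with the $d_1^{-p}$ from the normalization this gives the factor $d_1^{\#(\pi_1^{-1}\pi_2)-p}$. Assembling the three contributions yields \eqref{eq: Moments-indep1}.

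For Case II, $W_1=W_2$ means there is a single Gaussian matrix $G$ appearing with multiplicity $2p$ (two boxes per site of $W^p$), so a single removal permutation $\pi\in S_{2p}$ pairs all $2p$ copies of $G$ with all $2p$ copies of $\overline{G}$. The $s$-legs are again free, contributing $s^{\#\pi}$. The $d_2$-structure: within each of the two "slots" the $W^p$-trace imposes a full cycle $\gamma$, and since the two slots are not linked through the $d_2$-legs, the combined permutation is $\delta=\gamma\oplus\gamma\in S_{2p}$, giving $d_2^{\#(\delta^{-1}\pi)}$. The $d_1$-structure: the Bell projections now connect the two $d_1$-legs of the two $G$-boxes at the same site, which at the level of the $2p$ indices is precisely the permutation $\beta$ pairing index $i$ with index $i+p$; together with $d_1^{-p}$ this gives $d_1^{\#(\beta^{-1}\pi)-p}$, producing \eqref{eq: Moments-equal1}. (One sanity check I would run: specializing Lemma \ref{lem: 2} shows how the Case II sum restricts on the relevant geodesic permutations, and the two formulas should be consistent with each other in the appropriate sense.)

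The main obstacle I expect is not any single hard estimate but getting the three families of wires disentangled correctly and, in particular, pinning down the exact permutation that the Bell projection plus partial trace induce on the $d_1$-indices — i.e.\ justifying that it is $\pi_1^{-1}\pi_2$ in Case I and $\beta^{-1}\pi$ in Case II, with the right orientation conventions. This requires drawing the composite diagram for $\Tr[W^p]$ explicitly, tracking how the $d_1$-legs of $W_1$ and $W_2$ are glued site-by-site through $P_{d_1}$, and checking the single "long cycle" that the partial trace closes. Everything else — the $s$-legs being free, the $d_2$-legs carrying the cyclic structure of a matrix power — is routine graphical bookkeeping once the picture is drawn, and the final step is just reading off the exponents $\#(\text{color})$ for each color and collecting the $d_1^{-p}$ normalization.
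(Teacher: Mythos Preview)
Your proposal is correct and follows essentially the same route as the paper: both apply the graphical Gaussian (Wick) calculus to the diagram for $\Tr[W^p]$, sum over removals $(\pi_1,\pi_2)\in S_p\times S_p$ in Case~I and $\pi\in S_{2p}$ in Case~II, and read off the loop counts for the three decoration types exactly as you describe. The paper's proof differs only in presentation---it lists the contributions per decoration explicitly after fixing the labeling conventions---and your identification of the $d_1$-loop count via the Bell projection (yielding $\#(\pi_1^{-1}\pi_2)$ and $\#(\beta^{-1}\pi)$ respectively) is precisely the bookkeeping point the paper carries out on the diagram.
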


\begin{proof}

The diagrams corresponding to $W$ are given in Figure \ref{fig 2}. 

  \begin{figure}[H]
  \centering
  \includegraphics[height=0.4\textwidth]{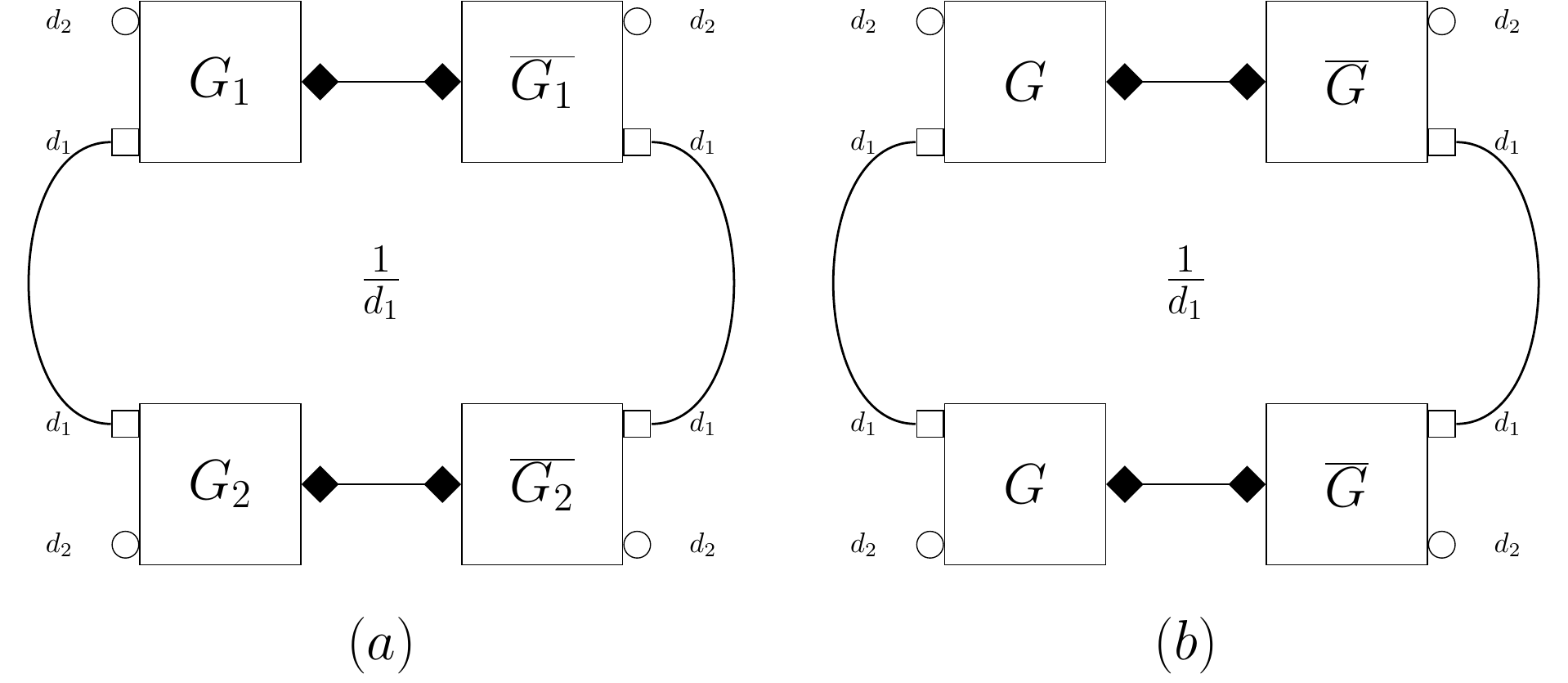}
  \caption{diagram (a) for the case I, (b) for the case II.}
  \label{fig 2}
  \end{figure}

For the case I, by using the graphic Gaussian calculus, we have the following formula:
\begin{equation*}
\mathbb{E} Tr [W^p]  =\sum_{\pi_1,\pi_2 \in S_{p}}  \mathcal{{D}}(G_1, G_2)_{\pi_1, \pi_2}.
\end{equation*}

To obtain the $\mathcal{{D}}(G_1, G_2)_{\pi_1, \pi_2}$. We label the $G_1$ (resp. $G_2$)
and the $\overline{G_1}$ (resp. $\overline{G_2}$) boxes with $1, \ldots, p.$  A removal $(\pi_1, \pi_2) \in S_{p} \times S_{p}$ of
the boxes $G_1, G_2$ and $\overline{G_1}, \overline{G_2}$ connects the decorations in the following way:
\begin{enumerate}
\item the white (resp. black) decorations of the $i$-th $G_1$ block are paired with the white (resp. black) decorations of the $\pi_1 (i)$-th $\overline{G_1}$ block;
\item the white (resp. black) decorations of the $i$-th $G_2$ block are paired with the white (resp. black) decorations of the $\pi_2 (i)$-th $\overline{G_2}$ block.
\end{enumerate}

We can now compute the contributions for each pairing $(\pi_1, \pi_2):$
\begin{enumerate}\item white "$\includegraphics[scale=0.12]{circle_w.pdf}$"-loops: $d_2^{\# (\gamma^{-1} \pi_1) + \# (\gamma^{-1} \pi_2)} $;
\item white "$\includegraphics[scale=0.15]{square_w.pdf}$"-loops: $d_1^{\#(\pi_1^{-1} \pi_2)};$
\item black "$\includegraphics[scale=0.13]{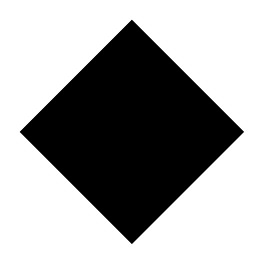}$"-loops: $s^{\# \pi_1 + \# \pi_2};$
\item normalization factors ${d_1}^{-1}$ from the $P_{d_1}$: ${d_1}^{-p}.$
\end{enumerate}

Hence
\begin{equation*}
\mathcal{{D}}(G_1, G_2)_{\pi_1, \pi_2} = d_2^{\# (\gamma^{-1} \pi_1) + \# (\gamma^{-1} \pi_2)}  s^{\# (\pi_1) + \# (\pi_2)} d_1^{\#(\pi_1^{-1} \pi_2)-p},
\end{equation*}
which completes the proof of the case I. 
\vspace{3mm}

For Case II, we label the $G$ and the $\overline{G}$ boxes in the following manner: $1^T, \ldots, p^T$ for the $G$ (resp. $\overline{G}$)
boxes that are on the top of the diagram and $1^B, \ldots, p^B$ for the $G$ (resp. $\overline{G}$) boxes that are on the 
bottom of the diagram. We shall resign the labels as $\{1^T, \ldots, p^T, 1^B, \ldots, p^B\}\simeq \{1, \ldots, 2p\}.$
With this notation, the two fixed permutations $\delta$ and $\beta \in S_{2p}$ introduced in the main text are the following: 
for all $i$,
\begin{equation*}
\delta(i^T) = (i+1)^T, \delta(i^B) =(i+1)^B, \;\; \text{and} \;\; \beta(i^T) = i^B, \beta(i^B) = i^T.
\end{equation*}

Now we have the following formula:
\begin{equation*}
\mathbb{E} Tr [W^p]  =\sum_{\pi \in S_{2p}}  \mathcal{{D}}(G)_{\pi}.
\end{equation*}

A removal $\pi \in S_{2p}$ of
the boxes $G$ and $\overline{G}$ connects the decorations in the following way: the white (resp. black) decorations of the $i$-th $G$ block are paired with the white (resp. black) decorations of the $\pi (i)$-th $\overline{G}$ block.

On the other hand,  the contributions for each pairing $\pi$ can be computed as the following:
\begin{enumerate}\item white "$\includegraphics[scale=0.12]{circle_w.pdf}$"-loops: $d_2^{\# (\delta^{-1} \pi)} $;
\item white "$\includegraphics[scale=0.15]{square_w.pdf}$"-loops: $d_1^{\#(\beta^{-1} \pi)};$
\item black "$\includegraphics[scale=0.13]{diamond_b.pdf}$"-loops: $s^{\# \pi};$
\item normalization factors ${d_1}^{-1}$ from the $P_{d_1}$: ${d_1}^{-p}.$
\end{enumerate}

Hence
\begin{equation*}
\mathcal{{D}}(G)_{\pi} = d_2^{\# (\delta^{-1} \pi)}  s^{\# \pi} d_1^{\#(\beta^{-1} \pi)-p},
\end{equation*}
which finishes the proof.
\end{proof}


\subsection{Moments of rescaled matrix of $W$ in the asymptotic regime $d_1, d_2 \rightarrow \infty, s/d_2 \rightarrow c$}

We introduce the following rescaled matrix:
\begin{equation}\label{eqn:007}
Z = d_2s \left(   \frac{W}{d_2^2 s^2}  - \frac{\un}{d_2^2}  \right).
\end{equation}

Then we have the following theorem:
\begin{theorem}
Let $c>0$ be a constant, then in case I and II, the moments of $Z$ under the asymptotic regime ($d_1, d_2 \rightarrow \infty$ and $s/d_2 \rightarrow c$) are given by
\begin{equation}\label{eqn:008}
\lim_{\begin{subarray}{c}\\ d_1, d_2 \rightarrow \infty \\ s/d_2 \rightarrow c \end{subarray}} \frac{1}{d_2^2}\mathbb{E}  Tr [Z^p] =  \sum_{\pi \in NC^0(p)} c^{2(\# \pi)-p}.
 \end{equation}
\end{theorem}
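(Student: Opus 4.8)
The plan is to reduce the limiting moments of $Z$ to the explicit permutation sums of Proposition~\ref{prop:Moments}, and then perform an asymptotic analysis of those sums, tracking the powers of $d_1$, $d_2$ and $s$ separately and using the triangle inequality~\eqref{eq:geo}, the identity~\eqref{eq:permutation}, Biane's bijection (Lemma~\ref{lem: 0}), and the combinatorial identities of Lemmas~\ref{lem: 1}--\ref{lem: 2}.

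First I would expand $Z^p$. Since $W$ commutes with $\un$, writing $Z=\tfrac{1}{d_2 s}W-\tfrac{s}{d_2}\un$ and applying the binomial theorem yields $Z^p=\sum_{k=0}^{p}\binom{p}{k}(-1)^{p-k}s^{p-2k}d_2^{-p}W^k$, so that
\[
\frac{1}{d_2^{2}}\,\E\,\Tr[Z^{p}]\;=\;\sum_{k=0}^{p}\binom{p}{k}(-1)^{p-k}\,\frac{s^{p-2k}}{d_2^{p+2}}\;\E\,\Tr[W^{k}],
\]
with the convention $\E\,\Tr[W^{0}]=d_2^{2}$. Substituting the appropriate formula from Proposition~\ref{prop:Moments}, the whole problem becomes: in the regime $d_1,d_2\to\infty$, $s/d_2\to c$, which index terms survive and what do they sum to.

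For Case~I I would insert~\eqref{eq: Moments-indep1} and rewrite every cycle count through $\#\sigma=|I|-|\sigma|$. With $s\sim c\,d_2$, the $(\pi_1,\pi_2)$-term in the $k$-th summand carries a factor $d_1^{\#(\pi_1^{-1}\pi_2)-k}$, whose exponent is $\le 0$ and vanishes iff $\pi_1=\pi_2$, together with a power of $d_2$ whose exponent equals $-2k-2+\sum_{i=1,2}\bigl(\#(\gamma^{-1}\pi_i)+\#\pi_i\bigr)$; since $\#(\gamma^{-1}\pi_i)+\#\pi_i=2k-|\gamma^{-1}\pi_i|-|\pi_i|\le 2k-|\gamma|=k+1$ by~\eqref{eq:geo}, this exponent is also $\le 0$ and vanishes iff each $\pi_i$ lies on a geodesic $id-\pi_i-\gamma$. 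Hence only the diagonal terms $\pi_1=\pi_2=:\pi\in S_{NC}(\gamma)$ survive, each contributing $\binom{p}{k}(-1)^{p-k}c^{\,p-2k+2\#\pi}$. Passing from $S_{NC}(\gamma)$ to $NC(k)$ via Lemma~\ref{lem: 0} and letting $\binom{p}{k}$ range over the $k$-subsets $I\subset[p]$, the limit becomes $\sum_{I\subset[p]}(-1)^{p-|I|}\sum_{\pi\in NC(I)}c^{\,p-2|I|+2\#\pi}$, which by Lemma~\ref{lem: 1} is precisely $\sum_{\pi\in NC^{0}(p)}c^{\,2\#\pi-p}$; this gives~\eqref{eqn:008} in Case~I.

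For Case~II the scheme is the same but with~\eqref{eq: Moments-equal1}: a term is now indexed by $\pi\in S_{2k}$, its $d_2$-power is governed by $\#\pi+\#(\delta^{-1}\pi)$, which by~\eqref{eq:geo} is maximal exactly on the geodesic $id-\pi-\delta$, and there is an additional factor $d_1^{\#(\beta^{-1}\pi)-k}$. On that geodesic Lemma~\ref{lem: 2} writes $\pi=\pi_1\oplus\pi_2$ with $\pi_1,\pi_2$ non-crossing on $[k]$ and gives $\#(\beta^{-1}\pi)=\#(\pi_1\pi_2)$, so the condition that the $d_1$-power also vanish pins down the admissible pairs $(\pi_1,\pi_2)$; applying Lemmas~\ref{lem: 0} and~\ref{lem: 1} once more collapses the surviving sum to the right-hand side of~\eqref{eqn:008}. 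The main obstacle lies precisely here: unlike in Case~I, the exponent $\#(\beta^{-1}\pi)-k$ is not automatically $\le 0$ for every $\pi\in S_{2k}$, so one must balance the $d_1$- and $d_2$-scalings at once and rule out the off-diagonal contributions — this is why Case~II yields only a \emph{weak} limit statement, and the decisive ingredient is the careful identification, via Lemma~\ref{lem: 2}, of the permutations $\pi=\pi_1\oplus\pi_2$ satisfying both the geodesic condition for $\delta$ and $\pi_1\pi_2=id$.
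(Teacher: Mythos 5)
Your proposal is correct and follows essentially the same route as the paper's proof: binomial expansion of $Z^p$, insertion of the moment formulas from Proposition~\ref{prop:Moments}, identification of the dominant terms via the geodesic conditions $id-\pi_i-\gamma$ (resp.\ $id-\pi-\delta$) together with the $d_1$-exponent forcing $\pi_1=\pi_2$ (resp.\ $\pi_1\pi_2=id$ via Lemma~\ref{lem: 2}), and finally Lemma~\ref{lem: 1} to collapse the alternating sum over subsets. The only cosmetic difference is summing over $k=|I|$ with binomial coefficients rather than over subsets $I\subset[p]$ directly, and your observation that in Case~II the $d_1$- and $d_2$-scalings must be balanced simultaneously is a fair point that the paper handles implicitly by taking the $d_2$-limit first.
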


\begin{proof}
By binomial identity, in both cases we have 
$$m_p (Z):= \frac{1}{d_2^2}  \mathbb{E}Tr [Z^p] = \frac{1}{d_2^2} \sum_{I \subset [p]}  \left( -\frac{s}{d_2} \right)^{|I^c|} \left(\frac{1}{d_2 s} \right)^{|I|}Tr [W^{|I|}].$$
Case I: Let $s = c d_2,$ we have  
 \begin{equation*}
\begin{split}
 m_p (Z)  
& =  \frac{1}{d_2^2}  \sum_{I \subset [p]} \left( -\frac{s}{d_2} \right)^{|I^c|} \left(\frac{1}{d_2 s} \right)^{|I|} \cdot \\
& \cdot \sum_{\pi_1, \pi_2 \in S_I} d_2^{ \#(\gamma_I^{-1} \pi_1) + \#(\gamma_I^{-1} \pi_1)} s^{\#(\pi_1)+\#(\pi_2)} d_1^{\#(\pi_1^{-1} \pi_2)-|I|}\\
& =  \sum_{I \subset [p] }  (-1)^{p-|I|} \sum_{\pi_1, \pi_2 \in S_I} d_2^{f_I(\pi_1, \pi_2)} c^{ -|\pi_1|-|\pi_2| +p} d_1^{\#(\pi_1^{-1} \pi_2)-|I|},
\end{split}
\end{equation*}
where $f_I(\pi_1, \pi_2) = 2|I| -|\gamma_I^{-1} \pi_1| -|\gamma_I^{-1} \pi_2|-|\pi_1|-|\pi_2|-2.$ Note that the power of $d_2$ becomes
\begin{equation*}
\begin{split}
f_I(\pi_1, \pi_2)  \leq 2 ( |I|-1-|\gamma_I|)=0,
\end{split}
\end{equation*}
where we used the fact that $|\gamma_I| = |I|-1$. Therefore the power of $d_2$ terms converge to zero as 
$d_2 \rightarrow \infty$, except the case when satisfy the following geodesic condition $id - \pi_1, \pi_2 - \gamma_I.$ 
Moreover,   for $\pi_1, \pi_2 \in S_I,$ $ \#(\pi_1^{-1} \pi_2) \leq |I|$ and the equality holds if and only if $\pi_1 = \pi_2$. 
Hence when $d_1, d_2 \rightarrow \infty, s/d_2 \rightarrow c$ the only terms in the moments $m_p(Z)$ which survive 
are those for which $id-\pi_1= \pi_2-\gamma_I$ and $|\pi_1|+ |\pi_2| = p$. That is, 
\begin{equation}\label{eq: moments-limit}
\begin{split}
\lim_{\begin{subarray}{c}\\ d_1, d_2 \rightarrow \infty\\ s/d_2 \rightarrow c \end{subarray}} m_p (Z) &=  \sum_{I \subset [p] }  (-1)^{p-|I|} \sum_{ id-\pi -\gamma_I}  c^{p -2|\pi|}\\
&  = \sum_{I \subset [p] }  (-1)^{p-|I|} \sum_{ \pi \in NC (I)} c^{2\# \pi-2|I| +p}\\
& = \sum_{\pi \in NC^0(p)} c^{2(\# \pi)-p},
\end{split}
\end{equation}
where we used Lemma \ref{lem: 1}. This completes the proof for case I.
\\

\noindent Case II: $W_1=W_2$, by applying Proposition 3.1, we have 
\begin{equation}\label{eq: almost}
\begin{split}
m_p(Z) & =  \sum_{I \subset [p]} \frac{1}{d_2^2}  \left( -\frac{s}{d_2} \right)^{|I^c|} \left(\frac{1}{d_2 s} \right)^{|I|} \sum_{\pi \in S_{I \cup I}}  d_2^{\# (\delta_I^{-1} \pi)}  s^{\# (\pi)} d_1^{\# (\beta_I^{-1} \pi)}\\
& = \sum_{I \subset [p]} (-1)^{p-|I|} \sum_{\pi \in S_{I \cup I} }d_2^{2|I| -|\delta_I^{-1} \pi| -|\pi| -2} c^{-|\pi|+p} d_1^{\# (\beta_I^{-1} \pi)-|I|}.
\end{split}
\end{equation}
We set  $s =c d_2$. 
The power of $d_2$ is given by
\begin{equation*}
\begin{split}
2|I| -|\delta_I^{-1} \pi| -|\pi| -2 \leq 2|I| -|\delta_I| -2 =0,
\end{split}
\end{equation*}
where we have used the fact that $|\delta_I| = 2|I| -2.$ Therefore, when $d_2 \rightarrow \infty, s/d_2 \rightarrow c$, 
the only terms in the moments $m_p(Z)$ which survive are those for which $id-\pi-\delta_I$, and we have 
\begin{equation*}
\begin{split}
\lim_{\begin{subarray}{c}\\ d_2 \rightarrow \infty \\ s/d_2 \rightarrow c \end{subarray}} m_p (Z) &=  \sum_{I \subset [p] }  (-1)^{p-|I|} \sum_{ id-\pi -\delta_I } c^{p-|\pi|} d_1^{\# (\beta_I^{-1} \pi)-|I|}.
\end{split}
\end{equation*}
By Lemma \ref{lem: 2}, $\pi$ can be decomposed into direct sum of $\pi_1 \oplus \pi_2$ such that $id- \pi_1, \pi_2 -\gamma_I,$ and $\# (\beta_I^{-1} \pi) = \#(\pi_1 \pi_2).$ Hence we have 
\begin{equation*}
\begin{split}
\lim_{\begin{subarray}{c}\\ d_1, d_2 \rightarrow \infty \\ s/d_2 \rightarrow c \end{subarray}} m_p (Z) &=  \sum_{I \subset [p] }  (-1)^{p-|I|} \sum_{id-\pi_1 = \pi_2^{-1}-\gamma_I} c^{p-|\pi_1|-|\pi_2|}\\
&= \sum_{I \subset [p] }  (-1)^{p-|I|} \sum_{id-\pi_1 = \pi_2-\gamma_I} c^{p-|\pi_1|-|\pi_2|}\\
& = \sum_{I \subset [p] }  (-1)^{p-|I|} \sum_{ id-\pi -\gamma_I } c^{p- 2|\pi|}\\
&  = \sum_{I \subset [p] }  (-1)^{p-|I|} \sum_{ \pi \in NC (I)} c^{2\# \pi-2|I| +p}\\
& = \sum_{\pi \in NC^0(p)} c^{2(\# \pi)-p},
\end{split}
\end{equation*}
where we used the identification between the set  
$\{ \pi \in S_I: id-\pi-\gamma\}$ and $NC(I)$. 
\end{proof}

Recall that the 
density of the Marcenko-Pastur distribution with parameter $c$ is given by
\begin{equation*}
\mu_{MP} (x) = \frac{\sqrt{4c -(x-1-c)^2}}{2\pi x} \un_{[(\sqrt{c}-1)^2, (\sqrt{c}+1)^2]} (x) dx. 
\end{equation*}
The density function of the standard semicircular distribution  is 
\begin{equation*}
\mu_{SC} (x) = \frac{1}{2\pi} \sqrt{4-x^2} \un_{[-2, 2]} (x) dx.
\end{equation*}

We have the following result similar to  \cite[Corollary 2.4, 2.5]{CNY2011}.
For completeness, we provide the proof. 
\begin{corollary}\label{cor: 1}
As $d_1, d_2 \rightarrow \infty, s/d_2 \rightarrow c,$ the random matrix $Z$ converges in moments to a centered 
Marchenko-Pastur distribution of parameter $c^2$
(rescaled by $c$). 

If $d_1, d_2 \rightarrow \infty, s/d_2 \rightarrow \infty,$ $Z$ converges in moments to the semicircular distribution. 
\end{corollary}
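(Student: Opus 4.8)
The plan is to identify the moment sequence
\[
M_p \;:=\; \lim_{\substack{d_1,d_2\to\infty\\ s/d_2\to c}} \frac{1}{d_2^2}\,\mathbb{E}\,\Tr[Z^p] \;=\; \sum_{\pi\in NC^0(p)} c^{\,2(\#\pi)-p}
\]
provided by the theorem with the moments of an explicit probability measure, through the free moment--cumulant formula. I will use that a compactly supported probability measure on $\mathbb{R}$ is determined by its moments, so that convergence in moments to such a measure is the same as weak convergence; that a distribution with free cumulants $(\kappa_n)_{n\ge1}$ has $p$-th moment $\sum_{\pi\in NC(p)}\prod_{V\in\pi}\kappa_{|V|}$; and that the Marchenko--Pastur law $\mu_{MP}$ of parameter $\lambda$ (free Poisson of rate $\lambda$) has all free cumulants equal to $\lambda$, hence mean $\lambda$.

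For the first assertion I take $\lambda=c^2$, subtract the mean to get the centered law, with free cumulants $\kappa_1=0$ and $\kappa_n=c^2$ for $n\ge2$, and then push forward under $x\mapsto x/c$ (this is the ``rescaling by $c$'' of the statement), which multiplies $\kappa_n$ by $c^{-n}$ and yields $\kappa_1=0$, $\kappa_n=c^{\,2-n}$ for $n\ge2$. Its $p$-th moment is then
\[
\sum_{\pi\in NC(p)}\prod_{V\in\pi}\kappa_{|V|} \;=\; \sum_{\pi\in NC^0(p)}\ \prod_{V\in\pi}c^{\,2-|V|} \;=\; \sum_{\pi\in NC^0(p)}c^{\sum_{V\in\pi}(2-|V|)} \;=\; \sum_{\pi\in NC^0(p)}c^{\,2(\#\pi)-p} \;=\; M_p ,
\]
the singletons dropping out because $\kappa_1=0$ and using $\sum_{V\in\pi}|V|=p$. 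Since this centered, rescaled Marchenko--Pastur law of parameter $c^2$ is compactly supported (hence moment-determinate), the theorem gives precisely the stated convergence, in Case I and Case II alike.

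For the second assertion I let $c=s/d_2\to\infty$. Working directly with $M_p$: every block of a partition in $NC^0(p)$ has at least two elements, so $\#\pi\le p/2$ with equality exactly when $p$ is even and $\pi$ is a non-crossing pair partition; thus $2(\#\pi)-p\le0$, with equality precisely on the set $NC_2(p)$ of non-crossing pair partitions, and therefore $M_p\to|NC_2(p)|$, which is the Catalan number $C_{p/2}$ for $p$ even and $0$ for $p$ odd --- exactly the moments of the standard semicircular law $\mu_{SC}$. (Equivalently, the free cumulants $\kappa_1=0$, $\kappa_n=c^{\,2-n}$ tend, as $c\to\infty$, to $\kappa_2=1$ and $\kappa_n=0$ for $n\ne2$, the cumulants of $\mu_{SC}$.) To obtain this as a genuine joint limit rather than an iterated one, I would instead re-run the moment estimate in the proof of the theorem with $c=s/d_2$ treated as an unbounded parameter: the factors $d_1^{\#(\pi_1^{-1}\pi_2)-|I|}$ and $d_2^{f_I(\pi_1,\pi_2)}$ still force $\pi_1=\pi_2=:\pi$ and the geodesic condition $id-\pi-\gamma_I$, and after the alternating cancellation of Lemma~\ref{lem: 1} the surviving sum $\sum_{\pi\in NC^0(p)}(s/d_2)^{\,2(\#\pi)-p}$ is dominated, as $s/d_2\to\infty$, by the non-crossing pair partitions.

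The cumulant bookkeeping of the first step and the inequality $\#\pi\le p/2$ on $NC^0(p)$ are routine. The point requiring care is the second regime: justifying the interchange of the limits $d_1,d_2\to\infty$ and $s/d_2\to\infty$ --- equivalently, checking in the term-by-term estimate of the theorem with $c$ unbounded that the contributions carrying a positive power of $c$ are annihilated by the alternating signs before one passes to $c\to\infty$.
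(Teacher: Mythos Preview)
Your proposal is correct and follows essentially the same route as the paper: for the first part you identify the free cumulants of the centered, rescaled Marchenko--Pastur law of parameter $c^2$ and match the moment--cumulant expansion to $M_p=\sum_{\pi\in NC^0(p)}c^{2(\#\pi)-p}$, exactly as the paper does (your handling of the ``rescaled by $c$'' is more explicit than the paper's terse ``recalling the factor $c$''); for the second part you use the same observation that $2(\#\pi)-p\le 0$ on $NC^0(p)$ with equality only on non-crossing pair partitions, giving Catalan moments.

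One remark: the concern you flag about the joint limit $d_1,d_2\to\infty$, $s/d_2\to\infty$ versus the iterated limit is legitimate, and the paper's own proof does not address it either --- it simply reads off the $c\to\infty$ behaviour from the formula established for fixed $c$. Your suggested fix (rerunning the estimate with $c=s/d_2$ unbounded and checking that the alternating-sign cancellation of Lemma~\ref{lem: 1} kills the positive-power-of-$c$ terms before passing to the limit) is the right way to make this rigorous, and goes beyond what the paper actually writes.
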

\begin{proof}
It is known that all the free cumulants of a centered Marchenko-Pastur (free Poisson) distribution with paramemter $c^2$ are equal to $c^2$, except the first one, which is zero (see \cite[Lecture 12]{NS2006}). Hence, we read from (\ref{eqn:008}) and the moment-free cumulant formula that the distribution determined by the moment series (\ref{eqn:008}) is 
the centered 
Marchenko-Pastur distribution of parameter $c^2$ by recalling the factor $c$. 

If $d_1, d_2 \rightarrow \infty, s/d_2 \rightarrow \infty$, only when $p$ is even and $\#\pi =p/2$, the moment is nonzero. 
In this case, the set $\{\pi \in NC^0(2n)| \#\pi=n \}$ is the same as 
the noncrossing pair partitions of $[2n]$, whose cardinality is the $n$-th Catalan number. Hence, the limit distribution is the semicircular distribution. 
\end{proof}


\subsection{Almost surely convergence of $Z$ in the asymptotic regime 
$d_1= d_2 \rightarrow \infty, s/d_2 \rightarrow c$
}

$\ $

$\ $

\noindent The main result in this subsection is the following result.

\begin{theorem}\label{thm:almost-surely} 
Let $d: =d_1=d_2$. 
For $Z$ be the random matrix defined as (\ref{eqn:007}), if $W_1, W_2$ are chosen independently, 
we have
$$\frac{1}{d^2} Tr [Z^p] \rightarrow \mathbb{E} \left( \frac{1}{d^2} Tr[Z^p] \right),$$
almost surely as $d \rightarrow \infty, s/d \rightarrow c$. 
\end{theorem}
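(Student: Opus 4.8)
The strategy is to establish the quantitative estimate $\mathrm{Var}\bigl(\tfrac1{d^2}\Tr[Z^p]\bigr)=O(d^{-2})$ along any sequence with $s/d\to c$, and then to invoke Chebyshev's inequality together with the Borel--Cantelli lemma (equivalently: $\sum_d\E\bigl|\tfrac1{d^2}\Tr[Z^p]-\E\tfrac1{d^2}\Tr[Z^p]\bigr|^2<\infty$ forces $\tfrac1{d^2}\Tr[Z^p]-\E\tfrac1{d^2}\Tr[Z^p]\to0$ almost surely). Since $Z=\tfrac1{ds}W-\tfrac sd\un$ is affine in $W$ by (\ref{eqn:007}), the binomial expansion already used in the previous subsection gives $\Tr[Z^p]=\sum_{I\subset[p]}a_I\Tr[W^{|I|}]$ with $a_I=(-s/d)^{p-|I|}(ds)^{-|I|}$, so that $|a_I|=O(d^{-2|I|})$ after $s=cd(1+o(1))$, and
\[
\mathrm{Var}\bigl(\Tr[Z^p]\bigr)=\sum_{I,J\subset[p]}a_Ia_J\Bigl(\E\bigl[\Tr[W^{|I|}]\Tr[W^{|J|}]\bigr]-\E\bigl[\Tr[W^{|I|}]\bigr]\E\bigl[\Tr[W^{|J|}]\bigr]\Bigr).
\]
As this sum is finite, it suffices to show that the connected two-point function satisfies $\E\bigl[\Tr[W^{a}]\Tr[W^{b}]\bigr]-\E\bigl[\Tr[W^{a}]\bigr]\E\bigl[\Tr[W^{b}]\bigr]=O(d^{\,2(a+b)+2})$ uniformly in $0\le a,b\le p$ (for comparison, the disconnected product is of the larger order $d^{\,2(a+b)+4}$ by the moment estimates of the previous subsection); since $|a_I||a_J|=O(d^{-2(|I|+|J|)})$, this yields $\mathrm{Var}(\Tr[Z^p])=O(d^2)$, i.e. $\mathrm{Var}\bigl(\tfrac1{d^2}\Tr[Z^p]\bigr)=O(d^{-2})$.

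To compute the joint moment one repeats the graphical Gaussian calculus of Proposition~\ref{prop:Moments}, Case I, verbatim but with two trace circuits, of lengths $a$ and $b$; with $d_1=d_2=d$ this gives
\[
\E\bigl[\Tr[W^{a}]\,\Tr[W^{b}]\bigr]=\sum_{\pi_1,\pi_2\in S_{a+b}}d^{\,\#(\delta^{-1}\pi_1)+\#(\delta^{-1}\pi_2)+\#(\pi_1^{-1}\pi_2)-(a+b)}\,s^{\,\#\pi_1+\#\pi_2},
\]
the only change from (\ref{eq: Moments-indep1}) being that the reference permutation $\delta:=\gamma_a\oplus\gamma_b\in S_{a+b}$ now has two cycles. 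The pairs $(\pi_1,\pi_2)$ preserving the partition $A\sqcup B$, with $A=\{1,\dots,a\}$ and $B=\{a+1,\dots,a+b\}$, factor as $\pi_i=\pi_i'\oplus\pi_i''$, and summing the corresponding terms reproduces exactly $\E[\Tr[W^a]]\,\E[\Tr[W^b]]$; these are precisely the contributions removed by the subtraction in $\mathrm{Var}$. It remains to show that every pair $(\pi_1,\pi_2)$ \emph{not} preserving $A\sqcup B$ contributes a term whose power of $d$ (after $s=cd$) is at most $2(a+b)+2$.

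For this genus estimate, substituting $\#\sigma=(a+b)-|\sigma|$ and using the triangle inequality (\ref{eq:geo}) in the form $|\pi_i|+|\delta^{-1}\pi_i|=|\pi_i|+|\pi_i^{-1}\delta|\ge|\delta|=(a+b)-2$, the power of $d$ of the term indexed by $(\pi_1,\pi_2)$ equals
\[
4(a+b)-\bigl(|\pi_1|+|\delta^{-1}\pi_1|\bigr)-\bigl(|\pi_2|+|\delta^{-1}\pi_2|\bigr)-|\pi_1^{-1}\pi_2|\;\le\;4(a+b)-2|\delta|=2(a+b)+4,
\]
with equality only when $id-\pi_1-\delta$, $id-\pi_2-\delta$ and $\pi_1=\pi_2$. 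By Biane's identification of geodesic permutations with non-crossing partitions (Lemma~\ref{lem: 0}), applied to the two cycles of $\delta$ separately exactly as in Lemma~\ref{lem: 2}, a permutation $\pi$ with $id-\pi-\delta$ necessarily preserves $A\sqcup B$; equivalently, if $\pi$ does not preserve $A\sqcup B$ then the group $\langle\pi,\delta\rangle$ acts transitively on $[a+b]$ and a standard Euler-characteristic bound for a pair of permutations forces $|\pi|+|\delta^{-1}\pi|\ge|\delta|+2$. Hence, for a pair not preserving $A\sqcup B$, at least one of the two bracketed quantities exceeds $|\delta|$ by at least $2$, so the power of $d$ is at most $2(a+b)+2$, as needed. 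Combining the three paragraphs gives $\mathrm{Var}\bigl(\tfrac1{d^2}\Tr[Z^p]\bigr)=O(d^{-2})$ and hence the asserted almost sure convergence.

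The only non-routine ingredient is the genus estimate of the last paragraph: one must check carefully that the geodesics from $id$ to the \emph{disconnected} permutation $\gamma_a\oplus\gamma_b$ are exactly the permutations preserving $A\sqcup B$, so that the disconnected diagrams account for \emph{precisely} $\E[\Tr[W^a]]\E[\Tr[W^b]]$ and connecting the two circuits genuinely costs a factor $d^{-2}$ (rather than merely $d^{-1}$). The binomial reduction, the graphical computation (a line-by-line copy of the proof of Proposition~\ref{prop:Moments} with two circuits), and the passage from a summable variance to almost sure convergence are all standard. Alternatively, one could bypass the genus estimate by a concentration argument: on the event $\{\|G_1\|,\|G_2\|\le K\sqrt s\}$, whose complement has probability decaying faster than any power of $d$ for $K$ large, the map $(G_1,G_2)\mapsto\tfrac1{d^2}\Tr[Z^p]$ is Lipschitz with a constant polynomial in $d$, so Gaussian concentration of measure gives deviation probabilities summable in $d$; the moment method above, however, fits the tools already developed in this paper.
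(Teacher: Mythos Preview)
Your proposal is correct and follows essentially the same route as the paper: a variance bound of order $O(d^{-2})$ obtained from the graphical Gaussian expansion with reference permutation $\gamma\oplus\gamma$, the geodesic/parity estimate to gain the extra factor $d^{-2}$, and then Chebyshev plus Borel--Cantelli. The only organizational difference is that the paper first reduces $\Tr[Z^p]$ to $\Tr[(W/ds)^p]$ and then matches the leading terms of $\E[X^2]$ and $(\E X)^2$ separately, whereas you keep the binomial sum and identify the covariance directly with the non--block-preserving permutations; your phrasing is a clean variant of the same argument.
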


\begin{proof}
We have 
\begin{equation}\label{eq: almost-surely}
Tr [Z^p] = \sum_{k=0}^p \binom{p}{k} (-1)^{p-k} \left( \frac{s}{d} \right)^{p-k} Tr \left[ \left( \frac{W}{ds} \right)^k \right].
\end{equation}
Hence, it is equivalently to show
$$\frac{1}{d^2} Tr \left[ \left( \frac{W}{ds} \right)^p \right] \rightarrow \mathbb{E} \left( \frac{1}{d^2} Tr \left[ \left( \frac{W}{ds} \right)^p \right]  \right) \;\; \text{almost surely}.$$
By using the Chebyshev inequality and the Borel-Cantelli lemma, it is sufficient 
to show the following inequality:
$$\sum_{d =1}^{\infty}  \left[ \mathbb{E} \left( \frac{1}{d^2}Tr \left[ \left( \frac{W}{ds} \right)^p \right]  \right)^2  - \left( \mathbb{E} \left( \frac{1}{d^2}Tr\left[ \left( \frac{W}{ds} \right)^p \right]  \right) \right)^2  \right] < \infty.$$

When $d=d_1=d_2$ and $s/d=c$, we write (\ref{eq: Moments-indep1}) in Case (I) of Theorem \ref{prop:Moments} as
\begin{equation*}
\frac{1}{d^2} \mathbb{E} \left(Tr\left[ \left( \frac{W}{ds} \right)^p \right] \right) =  \sum_{\pi_1, \pi_2 \in S_p} d^{f(\pi_1, \pi_2)} c^{ -|\pi_1|-|\pi_2| + p}, 
\end{equation*}
where 
\begin{equation*}
\begin{split}
f(\pi_1, \pi_2) & = 2p -|\gamma^{-1} \pi_1| -|\gamma^{-1} \pi_2|-|\pi_1|-|\pi_2|-2+ \#(\pi_1^{-1} \pi_2)-p\\
& = g(\pi_1, \pi_2) + \#(\pi_1^{-1} \pi_2)-p\leq 0, 
\end{split}
\end{equation*}
where 
\begin{equation*}
\begin{split}
g(\pi_1, \pi_2)& = 2p -|\gamma^{-1} \pi_1| -|\gamma^{-1} \pi_2|-|\pi_1|-|\pi_2|-2\\
 &\leq p-|\gamma^{-1} \pi_2|-|\pi_2|-1\leq 0.
\end{split}
\end{equation*}
We hence deduce
that $f(\pi_1, \pi_2) =0$ if and only if $id-\pi_1 = \pi_2 - \gamma.$ 
Note that $|\alpha^{-1} \beta | + | \beta^{-1} \pi|$ has the same parity with $|\alpha^{-1} \pi|$ for any permutation 
$\alpha, \beta, \pi.$ 
Hence the possible values of the function $g(\pi_1, \pi_2)$ are $-2k, k\in \mathbb{N}$. Therefore, we have
\begin{equation*}
\begin{split}
\frac{1}{d^2} \mathbb{E} \left(Tr \left[ \left( \frac{W}{ds} \right)^p \right] \right) & = \sum_{id-\pi_1, \pi_2 -\gamma} c^{ -|\pi_1|-|\pi_2| +p} + O\left( \frac{1}{d^2} \right).
\end{split}
\end{equation*}
Therefore we have 
\begin{equation}\label{eq: almost-surely1}
\begin{split}
\left( \mathbb{E} \left( \frac{1}{d^2}Tr \left[ \left( \frac{W}{ds} \right)^p \right]  \right) \right)^2  & =  \sum_{\begin{subarray}{c}\\ id-\pi_1, \pi_2 -\gamma\\ id-\pi_1', \pi_2' -\gamma \end{subarray}} c^{ -|\pi_1|-|\pi_2|-|\pi_1'| -|\pi_2'|+2p} +   O\left( \frac{1}{d^2} \right).
\end{split}
\end{equation}

The term $\mathbb{E} \left( \frac{1}{d^2}Tr\left[ \left( \frac{W}{ds} \right)^p \right]  \right)^2 $ is more involved to estimate 
and one needs to introduce the permutation $\bar{\gamma} = \gamma \oplus \gamma  \in  S_{2p}.$ By graphical 
Gaussian calculus, we have 
\begin{equation*}
\mathbb{E} \left( \frac{1}{d^2}Tr \left[ \left( \frac{W}{ds} \right)^p \right] \right)^2 = \sum_{\pi_1, \pi_2 \in S_{2p}} d^{\bar{f}(\pi_1, \pi_2)} c^{ -|\pi_1|-|\pi_2| + 2p},
\end{equation*}
where $\bar{f} (\pi_1, \pi_2) = 4p -|\bar{\gamma}^{-1} \pi_1| -|\bar{\gamma}^{-1} \pi_2|-|\pi_1|-|\pi_2|-4 + \#(\pi_1^{-1} \pi_2)-2p.$ This can be done similarly to the proof of Proposition \ref{prop:Moments}, and we leave the details to the reader.

One can easily show that 
\begin{equation*}
\begin{split}
\bar{g} (\pi_1, \pi_2) &:= 4p-|\bar{\gamma}^{-1} \pi_1| -|\bar{\gamma}^{-1} \pi_2|-|\pi_1|-|\pi_2|-4 \\
& \leq 4p-4 + 2 |\bar{\gamma} |\leq  0.
\end{split}
\end{equation*}
The inequality will be saturated when $id- \pi_1, \pi_2 -\bar{\gamma}.$ 
Note that $|\bar{\gamma}^{-1} \pi_1| +|\pi_1|$ and $|\bar{\gamma}^{-1} \pi_2| +|\pi_2|$ has the same parity as $|\bar{\gamma}|=2(p-1)$. Hence, we have
\begin{equation*}
\begin{split}
\mathbb{E} \left( \frac{1}{d^2}Tr \left[ \left( \frac{W}{ds} \right)^p \right]  \right)^2  & =   \sum_{id-\pi_1, \pi_2 -\bar{\gamma}} c^{ -|\pi_1|-|\pi_2| +2p}  + O\left( \frac{1}{d^2} \right).
\end{split}
\end{equation*}
Moreover, if $id-\pi_1, \pi_2 -\bar{\gamma},$ $\pi_1$ and $\pi_2$ can be decomposed into 
$$\pi_1 = \pi_1^{(1)} \oplus \pi_1^{(2)}, $$
$$\pi_2 = \pi_2^{(1)} \oplus \pi_2^{(2)}, $$
where $\pi_i^{(1)} \in S_p$ and $\pi_i^{(2)}\in S_p, i=1,2.$ Finally we have
\begin{equation}\label{eq: almost-surely2}
\begin{split}
\mathbb{E} \left( \frac{1}{d^2}Tr \left[ \left( \frac{W}{ds} \right)^p \right]  \right)^2  & = \sum_{\begin{subarray}{c}\\ id-\pi_1^{(1)}, \pi_1^{(2)} -\gamma\\ id-\pi_2^{(1)}, \pi_2^{(2)}-\gamma \end{subarray}} c^{ -|\pi_1^{(1)}|-|\pi_1^{(2)}|-|\pi_2^{(1)}| -|\pi_2^{(2)}|+2p} +   O\left( \frac{1}{d^2} \right).
\end{split}
\end{equation}
Combining equations \eqref{eq: almost-surely1} and \eqref{eq: almost-surely2}, we have
$$ \mathbb{E} \left( \frac{1}{d^2}Tr \left[ \left( \frac{W}{ds} \right)^p \right]  \right)^2  - \left( \mathbb{E} \left( \frac{1}{d^2}\left[ \left( \frac{W}{ds} \right)^p \right] \right) \right)^2  = O\left( \frac{1}{d^2}\right),$$
which completes our proof. 
\end{proof}

\begin{corollary}
Let $d:=d_1 =d_2,$ if $W_1, W_2$ are chosen independently, 
then the distribution of the random matrix $Z$ defined in (\ref{eqn:007}) converges 
to the centered Marcheko-Pastur distribution of parameter $c^2$ 
(rescalled by the factor $c$)
almost surely as $d \rightarrow \infty, s/d \rightarrow c$. 
\end{corollary}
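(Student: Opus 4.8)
The plan is to deduce the corollary by combining the almost sure convergence of the empirical moments to their means, which is Theorem \ref{thm:almost-surely}, with the identification of the limiting mean moments already obtained in \eqref{eqn:008} and Corollary \ref{cor: 1}. Write $m_p := \sum_{\pi \in NC^0(p)} c^{2(\#\pi)-p}$ for the quantity on the right-hand side of \eqref{eqn:008}. Since $W$ is the output of the entanglement swapping operation \eqref{eqn:004}, $W$ is Hermitian, so $Z$ is a Hermitian random matrix; let $\mu_Z^{(d)} = \frac{1}{d^2}\sum_{i=1}^{d^2}\delta_{\lambda_i(Z)}$ be its empirical spectral distribution, whose $p$-th moment equals $\frac{1}{d^2}Tr[Z^p]$. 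The first step is to write, in the regime $d=d_1=d_2\to\infty$, $s/d\to c$,
\begin{equation*}
\frac{1}{d^2}Tr[Z^p] = \left( \frac{1}{d^2}Tr[Z^p] - \mathbb{E}\left( \frac{1}{d^2}Tr[Z^p] \right) \right) + \mathbb{E}\left( \frac{1}{d^2}Tr[Z^p] \right),
\end{equation*}
and to observe that the first parenthesis tends to $0$ almost surely by Theorem \ref{thm:almost-surely}, while the second term tends to $m_p$ by \eqref{eqn:008} (the special case $d_1=d_2=d$ of that theorem). Hence, for each fixed $p$, $\frac{1}{d^2}Tr[Z^p] \to m_p$ almost surely.

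Next I would upgrade this to a statement holding simultaneously in $p$. For each $p\in\mathbb{N}$ pick a null event $N_p$ off which $\frac{1}{d^2}Tr[Z^p]\to m_p$, and set $N := \bigcup_{p\ge1} N_p$, which is again a null event. On the complement of $N$ the convergence $\frac{1}{d^2}Tr[Z^p] \to m_p$ holds for every $p$ at once; equivalently, every moment of the (now deterministic) sequence of probability measures $\mu_Z^{(d)}$ converges to $m_p$.

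Finally I would invoke the method of moments. By Corollary \ref{cor: 1}, $(m_p)_{p\ge0}$ is precisely the moment sequence of the centered Marchenko--Pastur law of parameter $c^2$ rescaled by $c$; this law is compactly supported and therefore uniquely determined by its moments. The classical moment-convergence criterion then guarantees that a sequence of probability measures all of whose moments converge to those of a moment-determinate measure converges weakly to that measure, tightness being automatic here from the convergence (hence boundedness) of the second moments. Applying this on the complement of $N$ to the sequence $\mu_Z^{(d)}$ yields that $\mu_Z^{(d)}$ converges weakly, almost surely, to the centered Marchenko--Pastur law of parameter $c^2$ rescaled by $c$, which is the assertion of the corollary.

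I expect the only genuinely delicate point to be this last passage, from almost sure convergence of each individual moment to almost sure weak convergence of the spectral measures: it requires both the countable-intersection argument and the moment-determinacy of the limit, the latter being supplied here by the compact support of the Marchenko--Pastur law. Everything else is bookkeeping built on Theorem \ref{thm:almost-surely} and Corollary \ref{cor: 1}.
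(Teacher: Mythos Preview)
Your argument is correct and is exactly the standard route the paper has in mind: the corollary is stated immediately after Theorem \ref{thm:almost-surely} without a separate proof, so it is meant to follow by combining that theorem with \eqref{eqn:008}/Corollary \ref{cor: 1} and the method of moments, precisely as you do. The only points worth noting are that the Hermiticity of $W$ (hence of $Z$) is clear from the sandwich form $W=(\langle\phi|\otimes\un)(W_1\otimes W_2)(|\phi\rangle\otimes\un)$ with $W_1\otimes W_2\ge 0$, and that moment-determinacy of the compactly supported limit together with the countable-union trick is indeed what justifies passing to almost sure weak convergence.
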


\begin{remark}
\begin{enumerate}
\item Note that the almost sure convergence of $Z$ to the 
semicircular distribution
also follows from the more general result by 
Bai and Yin \cite{BY1988}, 
which claims the almost sure convergence holds for any fixed $d_1$ under the asymptotic 
regime $d_2 \rightarrow \infty, s/d_2 \rightarrow \infty.$ However, our proof relies on moments techniques, 
and therefore make use of graphical Gaussian calculus, which makes the proof much simpler. 

\item The above results does hold for the case II, because if we let $d_1 =d_2 :=d,$ the power of $d$ terms in 
equation \eqref{eq: almost} is the following:
$2|I| -|\delta_I^{-1} \pi| -|\pi| -2+ \# (\beta_I^{-1}\pi )-|I|  \leq  \# (\beta_I^{-1}\pi )-|I| \leq |I| \neq 0,$ where 
$\pi\in S_{I\sqcup I}$.
Hence by using this moments technique, we do not know whether the convergence in moments holds or not. 

\item The random matrix model used in this section should be compared with the model studied in \cite{CNY2011}. Though these two models are different, their limit distributions are the same. 
\end{enumerate}
\end{remark}


\section{Applications to the PPT square conjecture}

\subsection{PPT square conjecture}
Let us first recall some notations used in quantum information theory. Consider $\mathbb{C}^n=\mathbb{C}^{d_1}\otimes \mathbb{C}^{d_2}$ with dimension $n=d_1d_2$. 
A quantum state $\rho$ on $\mathbb{C}^n$ is a positive operator with $Tr(\rho)=1$.
A state is called \emph{separable} if it can be written as a linear combination of product states. A state $\rho$ that is not separable is cassed \emph{entangled} state. 
A state $\rho$ on $\mathbb{C}^n$ is called PPT (positive partial transpose) if 
$\rho^\Gamma=(Id\otimes T)(\rho)$ is a positive operator, where $T$ is the transpose operator on $M_{d_2}(\mathbb{C})$. By defintion, we see that the partial transpose of
a separable state is always positve. The PPT property is relatively easier to check and is a useful criteria to study entanglement. We refer the reader to \cite{HHHH2007} for some more information about entangled states. 

Let $\rho_1, \rho_2$ be quantum states on $\mathbb{C}^{d_1} \otimes \mathbb{C}^{d_2}$, the typical 
entanglement swapping protocol can be represented as follows: 
\begin{equation}\label{eqn:014}
\rho = \frac{Tr_{d_1} [\rho_1 \otimes \rho_2\; P_{d_1}]}{\text{Normalized factor}},
\end{equation}
where $P_{d_1}$ is the Bell projection on $\mathbb{C}^{d_1} \otimes \mathbb{C}^{d_1}.$

To illustrate the action of entanglement swapping, we look at the case when $d_1 = d_2: =d$ and $\rho_1, \rho_2$ are $d$-dimensional maximally entangled states. Then $\rho$ is also a $d$-dimensional maximally entangled states. This can be easily seen via the graphical language in Figure \ref{fig 3}.

  \begin{figure}[H]
  \centering
  \includegraphics[height=0.25\textwidth]{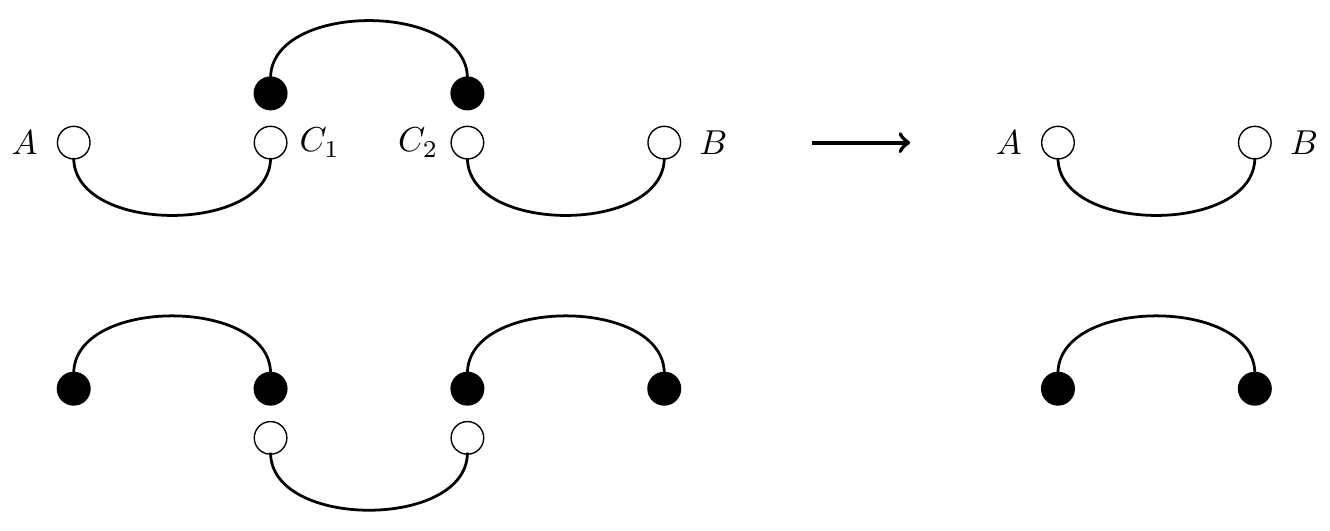}
  \caption{Diagram picture for the entanglement swapping of maximally entangled state.}
  \label{fig 3}
  \end{figure}

The PPT square conjecture suggests that if $\rho_1$ and $\rho_2$ are bound entangled states then $\rho$ defined in (\ref{eqn:014}) is separable. Since the bound entangled state with negative partial transpose is still a mystery, so in practice, we will focus on the PPT bound entangled states \cite{HHH1998}.

\begin{remark}
Due to the Choi-Jamio{\l}kowski isomorphism \cite{J1972, Choi1975} between quantum states and quantum channels, 
there is an equivalent \say{channel} form of the PPT square conjecture 
given by B\"{a}umal \cite[Lemma 14]{B2010} and and Christandl \cite{RJKHW2012}: {\it if $\Phi$ and $\Psi$ are PPT quantum channels,
then their composition $\Phi \circ \Psi$ must be entangled breaking.} Recently, Kennedy-Manor-Paulsen showed that the 
PPT square conjecture holds asymptotically, namely, they proved that the distance between the iteration of any PPT 
channel and the set of all entangled breaking channels approximates to zero \cite{KMP2017}. This result has been improved by Rahaman-Jaques-Paulsen \cite{RJP2018}, where they showed that every unital PPT channel becomes entanglement breaking after a finite number of iterations.  
\end{remark}


\subsection{Random induced state and PPT entanglement threshold}

We denote by $\mu_{n,s}$ the distribution of the induced state $Tr_s |\phi\rangle\langle \phi|,$ where $\phi$ is uniformly
 distributed on the unit sphere in $\mathbb{C}^n \otimes \mathbb{C}^s.$ A random state $\rho_{n,s} $ on 
 $\mathbb{C}^n$ with distribution $\mu_{n, s}$ is called an \emph{induced state}.

The induced states can be described by a random matrix model. 
Let $W = G G^{\ast} \in W(n ,s)$ 
be a Wishart matrix with parameter $(n, s)$. 
It is known that $W/Tr [W]$ is a random state with distribution $\mu_{n,s}$ (see  \cite{N2007} for example).
Moreover, $W/Tr [W]$ and $Tr [W]$ are independent \cite{N2007, Col2005}. 
In addition, $Tr [W]$ is strongly concentrated around $ns$. Hence in this sense we can write
$$\rho_{n, s} = \frac{W} {Tr [W]} \approx \frac{W}{n s} \; \text{for sufficiently large} \; n, s.$$

With the Wishart matrix model, it is possible to estimate the spectrum of $\rho_{n, s}$ in the asymptotic regime 
$s, n \rightarrow \infty$. By using techniques in asymptotic geometric analysis, Aubrun \cite{A2012} and 
Aubrun-Szarek-Ye \cite{ASY2014}  found the following PPT and entanglement threshold of $\rho_{n, s}$ respectively. Note here the partial transpose and separability of $\rho_{n, s}$ are due to the dipartion $\mathbb{C}^n = \mathbb{C}^{d} \otimes
 \mathbb{C}^{d},$ thus $n =d^2.$

\begin{proposition}\label{prop:PPT-shreshold}\cite{A2012}
For given $\epsilon >0,$ 
\begin{enumerate}
\item If $s \leq (4-\epsilon) d^2,$ the probability that $\rho_{d^2, s}$ is PPT is  exponentially decay to 0 as $s\rightarrow \infty.$
\item If $s \geq (4+\epsilon) d^2,$ the probability that $\rho_{d^2, s}$ is PPT is  exponentially decay to 1 as $s\rightarrow \infty.$
\end{enumerate}
\end{proposition}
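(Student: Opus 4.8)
\medskip
\noindent\emph{Proof strategy.}
The plan is to reduce the statement to a question about the partial transpose of a Wishart matrix and then to control its extreme eigenvalues. Positivity of a partial transpose is unchanged by multiplication by a positive scalar, and $\rho_{d^{2},s}$ has the law of $W/\Tr W$ for a Wishart matrix $W=GG^{\ast}\in W(d^{2},s)$ (with $G$ a $d^{2}\times s$ matrix of i.i.d.\ standard complex Gaussians); hence $\rho_{d^{2},s}$ is PPT if and only if $W^{\Gamma}\geq 0$, where $\Gamma$ denotes partial transposition for $\mathbb{C}^{d^{2}}=\mathbb{C}^{d}\otimes\mathbb{C}^{d}$. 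So everything reduces to the behaviour of $\lambda_{\min}(W^{\Gamma})$ as $d\to\infty$ with $s/d^{2}\to c$. The first step is to pin down the limiting spectrum of $W^{\Gamma}$: using the graphical Gaussian calculus of Section~2 one evaluates $\tfrac1{d^{2}}\mathbb{E}\,\Tr[(W^{\Gamma}/s)^{p}]$ exactly, just as in Proposition~\ref{prop:Moments}, the only new feature being that transposing one of the two $d$-legs reverses the cyclic order on that leg (it replaces the relevant full cycle by its inverse); in the limit this restricts the surviving contributions to non-crossing \emph{pair} partitions and yields the moments of the semicircular law $\mu_{c}$ of mean $1$ and variance $1/c$, supported on $[\,1-2/\sqrt c,\,1+2/\sqrt c\,]$. (The variance is read off from $\tfrac1{d^{2}}\mathbb{E}\,\Tr[(W^{\Gamma}-s\un)^{2}]=\tfrac1{d^{2}}\mathbb{E}\,\Tr[W^{2}]-s^{2}=sd^{2}$, since partial transposition preserves the trace of a square.) Consequently $0$ lies in the interior of $\operatorname{supp}\mu_{c}$ when $c<4$ and strictly to the right of it when $c>4$, which is the source of the threshold $s=4d^{2}$.

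\medskip
\noindent\emph{Part (1): $s\leq(4-\epsilon)d^{2}$.}
Fix $\eta>0$ small enough that $-\eta>1-2/\sqrt c$ and a Lipschitz function $f$ with $\mathbf{1}_{(-\infty,-\eta]}\leq f\leq\mathbf{1}_{(-\infty,-\eta/2]}$; then $\delta:=\int f\,d\mu_{c}\geq\mu_{c}((-\infty,-\eta])>0$, so $\mathbb{E}\bigl[\Tr f(W^{\Gamma}/s)\bigr]=(\delta+o(1))d^{2}$. Since $\Tr f(W^{\Gamma}/s)>0$ entails a negative eigenvalue of $W^{\Gamma}$, it suffices to show $\Tr f(W^{\Gamma}/s)$ concentrates. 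Because partial transposition is a Hilbert--Schmidt isometry, $\|(W-W')^{\Gamma}\|_{2}=\|W-W'\|_{2}$, so on the event $\{\|G\|_{\infty}\leq C\sqrt s\,\}$ (of probability $1-e^{-\Omega(s)}$) the map $G\mapsto\Tr f(W^{\Gamma}/s)$ is Lipschitz in $G$ with constant $O\bigl(d/(\eta\sqrt s)\bigr)$; Gaussian concentration then gives $\Tr f(W^{\Gamma}/s)\geq\tfrac12\delta d^{2}$, and hence a negative eigenvalue of $W^{\Gamma}$, with probability $1-e^{-\Omega(d^{2}s)}$ (the implied constant depending on $\epsilon$). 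Therefore $\mathbb{P}(\rho_{d^{2},s}\text{ is PPT})\to 0$ exponentially fast.

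\medskip
\noindent\emph{Part (2): $s\geq(4+\epsilon)d^{2}$ --- the main obstacle.}
Now one must show $\lambda_{\min}(W^{\Gamma})>0$ with exponentially high probability, i.e.\ that no eigenvalue of $W^{\Gamma}$ leaks below the bulk $[(1-2/\sqrt c)s,(1+2/\sqrt c)s]$ down to $0$; since Part~(1)'s input is only the averaged spectrum, this demands genuine control of the extreme eigenvalues. I would proceed in two stages. First, bound $\|W^{\Gamma}\|_{\infty}$ by a moment estimate: $\lambda^{2k}\leq\Tr[(W^{\Gamma})^{2k}]$ for every eigenvalue $\lambda$, and the same graphical calculus gives $\mathbb{E}\,\Tr[(W^{\Gamma})^{2k}]\leq d^{2}(Cs)^{2k}$ with $C\to 1+2/\sqrt c$ as $k\to\infty$, so taking $k$ to grow suitably with $s$ yields $\|W^{\Gamma}\|_{\infty}\leq(C+o(1))s$ with probability $1-e^{-\Omega(s)}$ and \emph{no} net. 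Second, run an $\epsilon$-net argument on $S^{d^{2}-1}$ with the now-harmless perturbation bound $2\epsilon\|W^{\Gamma}\|_{\infty}=O(\epsilon s)$: writing $W=\sum_{k=1}^{s}|\gamma_{k}\rangle\langle\gamma_{k}|$ with $\gamma_{k}$ i.i.d.\ standard complex Gaussian in $\mathbb{C}^{d^{2}}$, for a fixed unit vector $x$ one has $\langle x,W^{\Gamma}x\rangle=\sum_{k=1}^{s}\langle\gamma_{k},M_{x}\gamma_{k}\rangle$, where $M_{x}:=(|x\rangle\langle x|)^{\Gamma}$ is a fixed Hermitian matrix with $\Tr M_{x}=\Tr M_{x}^{2}=1$ and, crucially, $\|M_{x}\|_{\infty}\leq 1$ (its eigenvalues being $\pm\sigma_{i}\sigma_{j}$, where $\sigma_{i}$ are the singular values of the reshaping of $x$ and $\sum_{i}\sigma_{i}^{2}=1$); hence $\langle x,W^{\Gamma}x\rangle$ is a sum of $s$ i.i.d.\ variables of mean $1$, variance $1$ and unit sub-exponential scale, so Bernstein's inequality gives $\mathbb{P}\bigl(\langle x,W^{\Gamma}x\rangle<(1-2/\sqrt c-\eta)s\bigr)\leq e^{-\Omega((2/\sqrt c+\eta)^{2}s)}$. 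Because $(2/\sqrt c)^{2}s=4d^{2}$, a union bound over a net of cardinality $e^{O(d^{2}\log(1/\epsilon))}$ succeeds with $\epsilon$ a sufficiently small \emph{constant} (also small enough that $2C\epsilon<\eta$), giving $\lambda_{\min}(W^{\Gamma})\geq(1-2/\sqrt c-\eta)s>0$ with probability $1-e^{-\Omega(s)}$. The delicate points --- and the reason Part~(2) is harder than Part~(1) --- are squeezing the moment constant $C$ down to $1+2/\sqrt c$ so that the bulk edge sits strictly above $0$, and then matching the net parameters so the union-bound exponent stays positive; these are exactly the ``no eigenvalue outside the bulk'' facts that lie beyond the first-order spectral-distribution information of Step~1. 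Once they are in place, $\rho_{d^{2},s}$ is PPT with exponentially high probability, completing the dichotomy.
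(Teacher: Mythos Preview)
The paper does not prove this proposition at all: it is quoted verbatim from Aubrun \cite{A2012} and used as a black box, so there is no ``paper's own proof'' to compare against. Your sketch is therefore not a reconstruction of anything in the present paper but rather an outline of Aubrun's original argument.

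As such an outline, your strategy is essentially the right one and matches \cite{A2012}: reduce to $W^{\Gamma}$, identify the limiting spectral distribution of $W^{\Gamma}/s$ as a shifted semicircle with support $[1-2/\sqrt c,\,1+2/\sqrt c]$ (this is the main computation in \cite{A2012}, done there via moments exactly as you describe), and then read off the threshold $c=4$. Your Part~(1) is fine. For Part~(2), your two-stage plan (moment bound on $\|W^{\Gamma}\|_{\infty}$ plus an $\epsilon$-net with Bernstein on $\langle\gamma_k,M_x\gamma_k\rangle$) is viable in spirit, and your structural observations about $M_{x}=(|x\rangle\langle x|)^{\Gamma}$ are correct. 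However, you have correctly flagged the genuine difficulty: the union bound over an $\epsilon$-net of size $e^{O(d^{2}\log(1/\epsilon))}$ has to be beaten by a Bernstein exponent that is only $\Omega(d^{2})$, so the constants must be tracked carefully, and squeezing the moment constant down to the sharp edge $1+2/\sqrt c$ requires letting the moment order grow with $d$ and controlling the subleading terms. Aubrun handles this instead by working with the operator norm of the \emph{centered} matrix $(W^{\Gamma}-s\un)/\sqrt{sd^{2}}$ directly: he proves a sharp large-deviation bound for $\|W^{\Gamma}-s\un\|_{\infty}$ via high moments (using that the $2k$-th moment of the centered matrix is governed by non-crossing pairings, giving Catalan growth), which avoids the net entirely and yields the exponential rate cleanly. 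Your route can be made to work, but it is more laborious than the moment-method route actually taken in \cite{A2012}.
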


\begin{proposition}\label{prop:ET-shreshold}\cite{ASY2014}
For given $\epsilon >0,$ there exist constants $C_1, C_2$ and a function $s_0 = s_0(d)$ such that
$$ C_1d^3 \leq s_0 \leq C_2 d^3 \log^2(d) $$
and
\begin{enumerate}
\item  If $s < (1-\epsilon) s_0$, the probability that $\rho_{d^2, s}$ is separable is exponentially decay to $0$ as $s \rightarrow \infty$.
\item  If $s > (1+\epsilon) s_0$, the probability that $\rho_{d^2, s}$ is separable is exponentially decay to $1$ as $s \rightarrow \infty$.
\end{enumerate}
\end{proposition}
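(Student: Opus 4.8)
The statement is that of \cite{ASY2014}, so ``my'' proof would follow their asymptotic--geometric--analysis strategy, which I outline here. \emph{Step 1: pass to the Wishart/Gaussian model.} By the discussion above, $\rho_{d^2,s}$ has the law of $W/\Tr W$ with $W\in W(d^2,s)$, and $\Tr W$ concentrates sharply around $d^2 s$, so $\rho_{d^2,s}\approx W/(d^2 s)$. In the regime $s\gg d^2$ one has $W\approx s\,\un+\sqrt s\,M$ with $M$ a GUE matrix of size $n:=d^2$ (entries of unit variance), whence $\rho_{d^2,s}-\un/d^2\approx \frac{1}{n\sqrt s}\,M_0$, where $M_0$ is a traceless GUE matrix, of Hilbert--Schmidt norm $\|M_0\|_2\asymp n$ with overwhelming probability and with law invariant under unitary conjugation. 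Thus a random induced state lies, with high probability, at Hilbert--Schmidt distance $R\asymp 1/\sqrt s$ from the maximally mixed state $\un/n$, in a uniformly random direction of the real space $\mathcal M_0$ of traceless Hermitian operators on $\mathbb C^d\otimes\mathbb C^d$.

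\emph{Step 2: reformulate separability geometrically.} The state $\rho_{d^2,s}$ is separable iff $\rho_{d^2,s}-\un/n$ belongs to the convex body $\mathcal S_0:=\{\rho-\un/n:\rho\text{ separable}\}\subset\mathcal M_0$, i.e. iff $\|\rho_{d^2,s}-\un/n\|_{\mathcal S_0}\le 1$ for the Minkowski gauge of $\mathcal S_0$. Two metric properties of $\mathcal S_0$ control the answer. First, its inradius: by the Gurvits--Barnum separable-ball theorem, $\mathcal S_0$ contains the Hilbert--Schmidt ball of radius $r\asymp 1/n$ about the origin, so the gauge $\|\cdot\|_{\mathcal S_0}$ is $O(n)$-Lipschitz on $\mathcal M_0$. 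Second, its mean width, the genuinely hard input: writing $\mathrm{Sep}$ as the convex hull of the product pure states $|x\rangle\langle x|\otimes|y\rangle\langle y|$, the Gaussian mean width $w_G(\mathcal S_0)$ is the expected supremum of a Gaussian process indexed by $\mathbb{CP}^{d-1}\times\mathbb{CP}^{d-1}$, and a covering-number/chaining estimate over this low-dimensional index set gives $w_G(\mathcal S_0)\asymp\sqrt d$ up to a polylogarithmic factor --- markedly smaller than $w_G(\text{all states})\asymp\sqrt n=d$. One also needs a matching \emph{lower} bound on $w_G(\mathcal S_0)$, equivalently an upper bound on the mean width of the polar body $\mathcal S_0^{\circ}$ (the normalized entanglement witnesses), obtained from the $MM^*$-estimate after placing $\mathcal S_0$ in a suitable $M$-position.

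\emph{Step 3: conclude by Gaussian concentration.} Since $\rho_{d^2,s}-\un/n\approx\frac{1}{n\sqrt s}M_0$, we have $\|\rho_{d^2,s}-\un/n\|_{\mathcal S_0}\approx\frac{1}{n\sqrt s}\,h_{\mathcal S_0^{\circ}}(M_0)$, and $h_{\mathcal S_0^{\circ}}(M_0)$ concentrates, by Gaussian concentration with deviation controlled by the $O(n)$-Lipschitz constant of Step 2, around its mean $w_G(\mathcal S_0^{\circ})$. Unwinding the normalizations via the relation $w_G(\mathcal S_0)\,w_G(\mathcal S_0^{\circ})\asymp \dim\mathcal M_0=n^2$ (up to a polylogarithmic loss) reduces the question ``is $\|\cdot\|_{\mathcal S_0}$ below or above $1$'' to comparing $R\asymp 1/\sqrt s$ with the threshold scale $\asymp d^{-3/2}$, again up to polylogarithms. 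Hence, with $s_0$ the resulting threshold: if $s>(1+\epsilon)s_0$ then $\|\rho_{d^2,s}-\un/n\|_{\mathcal S_0}<1$ with probability exponentially close to $1$, so the state is separable; if $s<(1-\epsilon)s_0$ the point escapes $\mathcal S_0$ with probability exponentially close to $1$, so the state is entangled. The two-sided bound $C_1 d^3\le s_0\le C_2 d^3\log^2 d$ comes out exactly, the $\log^2 d$ gap being the compounded polylogarithmic loss between the mean-width estimate of Step 2 and the inradius used to bound the fluctuations.

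\emph{Main obstacle.} The crux is the sharp two-sided control of the mean width of the set of separable states in Step 2, and especially its \emph{lower} bound: this amounts to showing that a typical small unitarily-invariant perturbation of $\un/n$ is non-separable even when it happens to be PPT, i.e. to quantifying the gap between $\mathrm{Sep}$ and $\mathrm{PPT}$. This is the technical heart of \cite{ASY2014} and needs the full AGA toolkit --- Dvoretzky-type theorems, the $M$-ellipsoid and $M$-position, Chevet's inequality, and delicate covering estimates for the product-vector manifold --- whereas the reduction to the Wishart/GUE model (Step 1) and the Gurvits--Barnum inradius are comparatively routine.
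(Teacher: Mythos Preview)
The paper does not prove this proposition at all: it is quoted verbatim from \cite{ASY2014} with no accompanying argument, and the surrounding text only recalls the two ``crucial facts'' (unitary invariance allowing the passage to the gauge $\phi_K$ of the spectrum vector, and the semicircular limit enabling comparison with a GUE) that the present paper later re-uses for its own state $\rho$. So there is no ``paper's proof'' to compare against.

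Your outline is a faithful high-level sketch of the actual argument in \cite{ASY2014}: the reduction to a random Hilbert--Schmidt perturbation of $\un/n$ of size $\asymp 1/\sqrt{s}$ in a unitarily-invariant direction, the reformulation via the gauge of $\mathcal{S}_0$, the Gurvits--Barnum inradius giving an $O(n)$-Lipschitz constant, the two-sided mean-width estimate $w_G(\mathcal{S}_0)\asymp \sqrt{d}$ (up to logs), and Gaussian concentration. You correctly identify the hard part as the sharp two-sided control of the mean width of the separable body, in particular the lower bound that forces entanglement below threshold; this is exactly where \cite{ASY2014} deploys the $\ell$-position/$MM^*$ machinery and chaining over the product-state manifold. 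One small quibble: the line ``$w_G(\mathcal{S}_0)\,w_G(\mathcal{S}_0^{\circ})\asymp n^2$'' is not an identity for an arbitrary body and is precisely what has to be \emph{earned} by putting $\mathcal{S}_0$ in a good position, so it should appear as an output of Step~2 rather than as an ingredient of Step~3. Otherwise nothing is missing.
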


Denote by $M_n^{sa, 0}$ the set of all $n \times n$ self-adjoint matrices with trace $0$.  Let $K$ be a convex body in 
$M_{n}^{sa, 0},$ with the $\|\cdot\|_K$ the gauge function defined by  
$\|x\|_K = \inf \{t\geq 0, \; x \in tK \}.$ 
Following \cite{ASY2014},
define a gauge $\phi_K$ on $\mathbb{R}^{n, 0}$ by 
$$\phi_K (x) = \int_{U(n)} \|U Diag (x) U^\ast\|_K dU.$$
 
There are two crucial facts in Aubrun-Szarek-Ye's work \cite{ASY2014}.
\begin{enumerate}
\item $\mathbb{E} \; \phi_K (sp(\rho_{n, s}- \un/n)) = \mathbb{E} \|\rho_{n, s}- \un/n\|_K,$ where $sp(\rho_{n, s}- \un/n)$ is 
the spectrum vector of $(\rho_{n, s}- \un/n)$ in $\mathbb{R}^{n ,0}.$ This is due to the Haar unitary invariance of 
$(\rho_{n, s}- \un/n).$

\item When $n$ and $s/n$ tend to infinity, the empirical spectral distribution of  $\sqrt{ns} (\rho_{n, s}- \un/n)$  converges 
to $\mu_{SC}$, in probability, with respect to the $\infty$-Wasserstein distance. This is because as $n$ and $s/n$ tend to infinity, 
$\sqrt{ns} (\rho_{n, s}- \un/n)$ almost surely converges to $\mu_{SC}.$
\end{enumerate}

With the above two facts,  the gauge of $(\rho_{n, s}- \un/n)$ and $G_n$ can be comparable in the asymptotic regime 
$n\rightarrow \infty, s/n \rightarrow \infty,$ where $G_n$ is a GUE ensemble in $M_n^{sa, 0}$.  More precisely, by \cite[Proposition 3.1]{ASY2014}, we have 
$$\mathbb{E}\;  \left\| \rho_{n,s}- \frac{\un}{n} \right\|_K \approx \mathbb{E}\; \frac{1}{n\sqrt{s}} \left\|G_{n} \right\|_K, \text{as}\; n \rightarrow \infty, s/n \rightarrow \infty.$$

Combining the above two propositions, if we chose the parameter properly such that $4 d^2< s < s_0$, the random state 
$\rho_{n, s}$ would be generically PPT entangled. In other words, $\rho_{n, s}$ is PPT entangled with high probability as 
$n \rightarrow \infty.$

\subsection{PPT square conjecture generically holds when the states are chosed independently}

Let $\rho_1$ and $\rho_2$ are two random induced states with distribution $\mu_{d_1d_2, s}.$ Namely, we can write 
$$\rho_i = \frac{W_i}{Tr [W_i]}, i=1,2,$$
where $W_i \in W(d_1d_2, s), i=1,2.$ Recall the state that is induced by the "entanglement swapping" protocol is the following:
\begin{equation}\label{eq:ES2}
\begin{split}
\rho & = \frac{Tr_{d_1} \left[ \rho_1 \otimes \rho_2 \; P_{d_1}\right]}{\text{Normalized factor} } = \frac{  \frac{1}{d_1} Tr_{d_1} \left[ W_1 \otimes W_2 \; P_{d_1}\right]}{\text{Normalized factor} }\\
& = \frac{W}{Tr [W]},
\end{split}
\end{equation}
where $W$ is defined in (\ref{eqn:004}).

\begin{lemma}\label{lem: final}
If $\rho_1$ and $\rho_2$ are two random induced states with distribution $\mu_{d_1d_2, s},$  then the random state 
$\rho$ on $\mathbb{C}^{d_2} \otimes \mathbb{C}^{d_2}$ converges in moments to $\mu_{d_2^2, s^2}$ as 
$d_1 \rightarrow \infty.$
\end{lemma}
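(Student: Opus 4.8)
The plan is to realize $W$ explicitly as a rescaled square $HH^{\ast}$ of a $d_2^2\times s^2$ random matrix $H$, to recognize $H/\sqrt{d_1}$ as a classical central-limit sum, and then to read off the lemma from the multivariate CLT together with the continuous mapping theorem; no moment computation will be needed.

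First I would unfold the entanglement swapping. Write $W_i=G_iG_i^{\ast}$ with $G_i$ a $(d_1d_2)\times s$ matrix of i.i.d. standard complex Gaussian entries, and split $\mathbb{C}^{d_1d_2}=\mathbb{C}^{d_1}\otimes\mathbb{C}^{d_2}$; let $(G_i)_c$, $c=1,\dots,d_1$, be the $d_2\times s$ block of $G_i$ obtained by fixing the $\mathbb{C}^{d_1}$-index to $c$, so that the $(c,a)$ block of $W_i$ is $(W_i)_{ca}=(G_i)_c(G_i)_a^{\ast}$. Using $P_{d_1}=\sum_{a,b=1}^{d_1}|a\rangle\langle b|\otimes|a\rangle\langle b|$ and computing the partial trace blockwise, one finds
\[
Tr_{d_1}\big[(W_1\otimes W_2)P_{d_1}\big]=\sum_{a,c=1}^{d_1}(W_1)_{ca}\otimes(W_2)_{ca}=\Big(\sum_{c=1}^{d_1}(G_1)_c\otimes(G_2)_c\Big)\Big(\sum_{a=1}^{d_1}(G_1)_a\otimes(G_2)_a\Big)^{\ast},
\]
so that $W=\frac1{d_1}HH^{\ast}$ with $H:=\sum_{c=1}^{d_1}(G_1)_c\otimes(G_2)_c$, a $d_2^2\times s^2$ matrix. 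Since the entries of $G_1$ and $G_2$ are i.i.d., the $2d_1$ matrices $(G_1)_1,\dots,(G_1)_{d_1},(G_2)_1,\dots,(G_2)_{d_1}$ are mutually independent $d_2\times s$ complex Ginibre matrices, and hence $H/\sqrt{d_1}=\frac1{\sqrt{d_1}}\sum_{c=1}^{d_1}X_c$ is a normalized sum of the i.i.d., centered, finite-moment random matrices $X_c:=(G_1)_c\otimes(G_2)_c$.

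The one point that really matters is that $X_1$ has exactly the covariance of a standard complex Gaussian matrix: separating the $G_1$- and $G_2$-factors and using $\mathbb{E}[z\bar z]=1$ and $\mathbb{E}[z^2]=0$ for a standard complex Gaussian, one gets $\mathbb{E}\big[(X_1)_{IJ}\,\overline{(X_1)_{KL}}\big]=\delta_{IK}\delta_{JL}$ and $\mathbb{E}\big[(X_1)_{IJ}(X_1)_{KL}\big]=0$ for all (composite) row indices $I,K$ and column indices $J,L$. By the multivariate central limit theorem applied to the vector of real and imaginary parts of the entries, $H/\sqrt{d_1}$ therefore converges in distribution, as $d_1\to\infty$, to a $d_2^2\times s^2$ matrix $G'$ of i.i.d. standard complex Gaussians. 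The continuous mapping theorem then gives $W=(H/\sqrt{d_1})(H/\sqrt{d_1})^{\ast}\to W':=G'(G')^{\ast}\in W(d_2^2,s^2)$ in distribution, and, since $Tr\,W'>0$ almost surely and $A\mapsto A/Tr\,A$ is continuous on $\{A:Tr\,A\neq0\}$, applying it to the pair $(W,Tr\,W)$ shows $\rho=W/Tr\,W\to W'/Tr\,W'$ in distribution; the limit $W'/Tr\,W'$ is by definition a random induced state with distribution $\mu_{d_2^2,s^2}$.

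It remains to pass from convergence in distribution to convergence in moments, which is immediate: $\rho$ is a density matrix, so $0\le Tr[\rho^p]\le1$ for every $p$, and bounded convergence gives $\mathbb{E}\,Tr[\rho^p]\to\mathbb{E}\,Tr[(W'/Tr\,W')^p]$ for all $p$. I expect the only delicate step to be the covariance (and vanishing pseudo-covariance) check for $X_1$, since that is precisely what forces the CLT limit to be the unitarily invariant complex Ginibre ensemble — and hence the limit of $\rho$ to be the genuine induced-state distribution $\mu_{d_2^2,s^2}$ — rather than some other Gaussian. A moment-only alternative, closer to the rest of the paper, would begin from Proposition~\ref{prop:Moments}(I): as $d_1\to\infty$ only the diagonal terms $\pi_1=\pi_2$ survive, so $\mathbb{E}\,Tr[W^p]\to\sum_{\pi\in S_p}d_2^{2\#(\gamma^{-1}\pi)}s^{2\#\pi}=\mathbb{E}\,Tr[(W')^p]$; but one would then have to compute the joint moments of $(W,Tr\,W)$ and match them with those of $(W',Tr\,W')$, which is exactly the bookkeeping the CLT argument circumvents.
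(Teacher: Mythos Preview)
Your argument is correct and follows essentially the same route as the paper: both factor $W=GG^{\ast}$ with $G=H/\sqrt{d_1}$ a $d_2^2\times s^2$ matrix whose entries $g_{kk',tt'}=\frac{1}{\sqrt{d_1}}\sum_j g^{(1)}_{jk,t}g^{(2)}_{jk',t'}$ are normalized i.i.d. sums, and then invoke the central limit theorem to identify the limit with a Ginibre matrix $G'$, hence $W$ with $W'\in W(d_2^2,s^2)$. Your version is in fact a bit more careful than the paper's sketch, explicitly verifying the joint covariance structure needed for the multivariate CLT and closing the gap between convergence in distribution and convergence in moments via the uniform bound $0\le Tr[\rho^p]\le 1$.
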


\begin{proof}
Let $W_i = G_i G_i^{\ast},$ where $G_i$'s entries are $\{ g^{(i)}_{jk, t}, j=1, \ldots, d_1, k=1, \ldots, d_2, t=1, \ldots, s\},  i=1, 2.$ After simple calculation we can obtain: 
$$W = G G^{\ast},$$
where  $G$ is a $d_2^2 \times s^2$ matrix with the following entries $\{ g_{kk', tt'}, k, k' =1, \ldots, d_2, t, t' =1, \ldots, s\}$:
\begin{equation}\label{eq: ES-main}
g_{kk', tt'} = \frac{1}{\sqrt{d_1}} \sum_{j=1}^{d_1} g^{(1)}_{jk, t} g^{(2)}_{jk', t'}.
\end{equation}
If we suppose $\rho_1$ and $\rho_2$ are independent, i.e, $g^{(i)}_{jk, t}$ are independent standard normal random 
variables, then due to the classical central limit theorem, the distribution of $g_{kk', tt'}$  converges in moments to a 
standard normal distribution as $d_1 \rightarrow \infty$ for every $k, k', t, t'$. 
Hence $W \in W(d_2^2, s^2)$ when $d_1=\infty$.
\end{proof}

By letting $p=1$ in the equation \eqref{eq: Moments-indep1}, we have 
$\mathbb{E} (Tr [W]) =d_2^2 s^2$. Moreover, $W/Tr [W]$ and $Tr[W]$ are  independent as $d_1\rightarrow\infty$ 
(this is due to Lemma \ref{lem: final} and the discussion in Section 4.2). Moreover, 
$W$ is strongly concentrated around $d_2^2 s^2$ for sufficiently large 
$d_1, d_2$ and $s$. Hence we can formally use $W/(d_2^2s^2)$ to replace
$\rho=W/(Tr[W])$ for sufficiently large $d_1, d_2$ and $s$ (see for \cite[Proposition 6.34]{AS2017} and the discussion after that).
In addition, we can further treat $ W$ following the same distribution as the Wishart matrix $W(d_2^2, s^2)$ for large $d_1$ for our purpose. 

\begin{remark}
The above lemma and discussion provide a direct explanation that the results in Section 3.2 also holds for our model $W$ defined in (\ref{eqn:004}) when $W_1, W_2$ are chosen independently. To make the argument rigrously, we need the almost sure convergence developed in Section 3.3.
\end{remark}

In the rest of this section, we assume $d: =d_1 =d_2,$ and denote $n =d^2.$ Suppose $\rho_1$ and $\rho_2$ are two independent random states with distribution $\mu_{d^2, s}$ which are generically PPT entangled, then the state $\rho$ in equation \eqref{eq:ES2} is generically separable. Our idea is to adapt the arguments in \cite{ASY2014}. To this end, according to the lemma \ref{lem: final} and the discussion, we can approximately write $\rho = W/ns^2$ for sufficiently large $d$ and $s,$ where $W \in W(n ,s^2).$ So instead of $\rho,$ we can consider $W/ns^2$ in this asymptotic picture ($d, s\rightarrow \infty$). Recall there are two important ingredients in \cite{ASY2014}. The fact that $\rho = W/n s^2$ enables us to  
compare the gauge of the spectrum vector of $\rho- \un/n$ and itself, this is due to the Haar unitary invariance of $W.$ On the other hand, for the second ingredient, we have to use our theorem \ref{thm:almost-surely}, by which we can compare the gauge $\rho-\un/n$ and the GUE ensemble $G_n.$ Then by using the concentration of measure technique (see for instance  \cite[Section 2.2]{ASY2014}), we can show the separability of $\rho.$ 

In conclusion, we can roughly say that the distribution of $\rho$ is $\mu_{n , s^2},$ and the parameter $s^2$ makes $\rho$ is generically separable. The following is our final theorem.    

\begin{theorem}
Let $\rho_1$ and $\rho_2$ are two independent random states with distribution $\mu_{d^2, s}$ which are generically 
PPT entangled, then the state $\rho$ in equation \eqref{eq:ES2} is generically separable.
\end{theorem}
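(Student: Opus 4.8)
The plan is to adapt the Aubrun--Szarek--Ye machinery \cite{ASY2014} verbatim, with the induced-state parameter $\mu_{n,s}$ replaced by $\mu_{n,s^2}$ coming from the entanglement swapping. The two structural inputs of their separability argument both survive: first, by \eqref{eq:ES2} and Lemma \ref{lem: final}, the state $\rho$ is (for $d_1=d$ large, and up to the strong concentration of $Tr[W]$ around $ns^2$) distributed like $W/(ns^2)$ with $W \in W(n,s^2)$, and since $W = GG^\ast$ with $G$ having i.i.d.\ entries whose law converges to the standard complex Gaussian, the random matrix $\rho - \un/n$ is asymptotically Haar-unitarily invariant; hence
$$\mathbb{E}\, \phi_K\big(sp(\rho - \un/n)\big) = \mathbb{E}\, \big\| \rho - \un/n \big\|_K$$
just as in fact (1) of Section 4.2. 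Second, by Corollary \ref{cor: 1} together with the almost sure convergence in Theorem \ref{thm:almost-surely}, the empirical spectral distribution of $d s\,(\rho - \un/n) = Z$ converges to the centered Marchenko--Pastur law of parameter $c^2$ (rescaled by $c$) as $d \to \infty$, $s/d \to c$; in the further regime $s/d \to \infty$ this limit is the semicircular law $\mu_{SC}$, so the spectral scaling and limiting profile of $\rho - \un/n$ match those of $(ns^2)^{-1/2}$ times a GUE matrix $G_n$, which is fact (2).

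With these in place, the argument proceeds exactly along the lines of \cite[Proposition 3.1]{ASY2014}: one fixes $K$ to be (a suitable multiple of) the set of separable states, uses the $\infty$-Wasserstein convergence of the rescaled spectrum to deduce
$$\mathbb{E}\, \Big\| \rho - \frac{\un}{n} \Big\|_K \;\approx\; \mathbb{E}\, \frac{1}{n s}\big\| G_n \big\|_K \qquad \text{as } d \to \infty,\ s/d \to \infty,$$
where I am writing $s^2$ for the environment dimension of $\rho$, so the normalizing constant is $(n s^2)^{-1/2}$ and $G_n$ is GUE in $M_n^{sa,0}$. One then invokes the known estimates on $\mathbb{E}\,\|G_n\|_K$ for $K$ the separability body, which give that $\rho$ lies inside $K$ with probability exponentially close to $1$ precisely when the effective environment dimension $s^2$ exceeds the separability threshold $s_0(d) \lesssim d^3 \log^2 d$ of Proposition \ref{prop:ET-shreshold}. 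Choosing $s$ in the window $4 d^2 < s$ with $s^2 > s_0(d)$ — e.g.\ $s \sim d^2$, so that $\rho_1,\rho_2$ are PPT entangled by Proposition \ref{prop:PPT-shreshold} while $s^2 \sim d^4 \gg d^3 \log^2 d$ — forces $\rho$ to be separable generically. The concentration-of-measure step (controlling the fluctuation of $\|\rho - \un/n\|_K$ around its mean) is standard Gaussian concentration as in \cite[Section 2.2]{ASY2014}, applied to the Lipschitz function $G \mapsto \|\rho - \un/n\|_K$ on the entries of $G$.

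The main obstacle is making the replacement $\rho \rightsquigarrow W/(ns^2)$, $W \in W(n,s^2)$ genuinely rigorous rather than heuristic: Lemma \ref{lem: final} only gives convergence \emph{in moments} of the rescaled entries $g_{kk',tt'}$ of \eqref{eq: ES-main} to Gaussians as $d_1 \to \infty$, whereas the AGA argument needs the stronger conclusion that the relevant geometric functionals of $\rho$ (the $K$-gauge of the spectrum, and its concentration) are, for $d_1$ large, within a negligible error of their Wishart counterparts. The honest route is to keep $d_1 = d_2 = d$ throughout, never actually pass $d_1 \to \infty$ separately, and instead use Theorem \ref{thm:almost-surely} directly: it gives almost sure convergence of $\frac{1}{d^2} Tr[Z^p]$ to its mean for the genuine model $W$ of \eqref{eqn:004}, which upgrades to almost sure $\infty$-Wasserstein convergence of the empirical spectrum of $Z$ to $\mu_{SC}$ in the regime $s/d \to \infty$ by the standard argument (moment convergence plus a priori bounds on the largest eigenvalue). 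That is exactly the input fact (2) requires, so the Wishart-replacement is only needed to identify the \emph{limit} and to borrow the threshold $s_0(d)$ from Proposition \ref{prop:ET-shreshold}; the latter is legitimate because both thresholds depend only on the limiting spectral profile, which we have shown agrees. A secondary technical point is verifying the Haar-invariance of $\rho - \un/n$ for the true model \eqref{eqn:004} — this follows since conjugating $G_1, G_2$ by independent Haar unitaries on the $\mathbb{C}^{d_2}$ factors induces a Haar conjugation of $\rho$, so fact (1) holds exactly and needs no limiting argument at all.
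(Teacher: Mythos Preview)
Your proposal follows essentially the same route as the paper: replace the ASY induced-state input $\mu_{n,s}$ by $\mu_{n,s^2}$ via Lemma~\ref{lem: final}, use Theorem~\ref{thm:almost-surely} to feed the semicircular limit into \cite[Proposition~3.1]{ASY2014}, obtain $\mathbb{E}\|\rho-\un/n\|_{\mathcal{S}_0}\approx \frac{1}{ns}\mathbb{E}\|G_n\|_{\mathcal{S}_0}$, apply the $O(d^{3/2}\log d)$ bound and concentration, and finish by observing that $s>4d^2$ forces $s^2>s_0(d)$. This matches the paper's proof in both structure and detail.

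One point in your last paragraph is not right, however. You claim that conjugating $G_1,G_2$ by Haar unitaries on the $\mathbb{C}^{d_2}$ factors ``induces a Haar conjugation of $\rho$, so fact~(1) holds exactly.'' It does not: that operation only shows invariance of the law of $\rho$ under conjugation by $U_1\otimes U_2$ with $U_1,U_2\in U(d_2)$, i.e.\ invariance under the subgroup $U(d_2)\times U(d_2)\subset U(d_2^2)$. Fact~(1) requires invariance under the \emph{full} unitary group $U(d_2^2)$, and for the genuine model~\eqref{eqn:004} at finite $d_1$ the entries $g_{kk',tt'}$ of~\eqref{eq: ES-main} are not i.i.d.\ complex Gaussians, so this full invariance fails. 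The paper makes exactly this point: the equality $\mathbb{E}\,\phi_{\mathcal{S}_0}(sp(\rho-\un/n))=\mathbb{E}\|\rho-\un/n\|_{\mathcal{S}_0}$ is only asserted \emph{as $d\to\infty$}, precisely because Haar invariance is asymptotic here (via Lemma~\ref{lem: final}), not exact. Your earlier formulation (``asymptotically Haar-unitarily invariant'') was correct; the final sentence overclaims and should be dropped.
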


\begin{proof}

By Lemma \ref{lem: final} and the discussion above, 
we can formally write 
$$\rho = \frac{W}{n s^2}, \; as  \; d, s \rightarrow \infty,$$
where $W \in W(d^2, s^2).$  Denote  by $\mathcal{S}$ the set of all separable quantum states on $\mathbb{C}^d \otimes \mathbb{C}^d$, 
and let $\mathcal{S}_0 = \mathcal{S}- \un /n.$ Obviously, $\mathcal{S}_0$ is a convex set of $M_{n}^{sa, 0}.$ Hence for convex body $\mathcal{S}_0$ in $M^{sa, 0}_{n},$ we have 
$$ \mathbb{E} \; \phi_{\mathcal{S}_0} (sp(\rho- \un/n)) =   \mathbb{E}\; \left\|\rho-\frac{\un}{n} \right\|_{\mathcal{S}_0}, \text{as} \; d \rightarrow \infty.$$

Here let us mention that in \cite{ASY2014} the above equation holds for arbitrary $d.$ However, the condition $d \rightarrow \infty$ is necessary in our paper, since $\rho$ is Haar unitary invariance only if $d= d_1 \rightarrow \infty,$ which is a necessary condition for the equation. 
On the other hand, combining with the theorem \ref{thm:almost-surely},  $d s (\rho- \un/n)$ almost surely 
converges to $\mu_{SC}$ as $d \rightarrow \infty, s/d \rightarrow \infty.$ Similar to \cite[Proposition 3.1]{ASY2014} we have
\[
\mathbb{E}\;  \left\| \rho- \frac{\un}{n} \right\|_{\mathcal{S}_0} \approx \mathbb{E}\; \frac{1}{n s} \left\|G_{n} \right\|_{\mathcal{S}_0}, \text{as} \; d \rightarrow \infty, s/d \rightarrow \infty,
\]
where $G_{n}$ is a GUE ensemble in $M_{n}^{sa, 0}.$  The symbol "$\approx$" means the terms of left (right) hand side are bounded by each other (up to constants $c_{d, s}, C_{d, s}$, and $ c_{d, s}, C_{d, s} \rightarrow1 as d \rightarrow \infty, s/d \rightarrow \infty$).

Recall that the quantum 
state $\rho$ on $\mathbb{C}^d \otimes \mathbb{C}^d$ is separable if and only if \cite{ASY2014}

$$\left\| \rho \right\|_{\mathcal{S}_0} \leq 1 \;  \Longleftrightarrow  \;\left\| \rho- \frac{\un}{n} \right\|_{\mathcal{S}_0} \leq 1.$$
Moreover, we have, by \cite[Section 4]{ASY2014},
$$\mathbb{E} \left\|\frac{G_n}{n} \right\|_{\mathcal{S}_0} = O(d^{\frac{3}{2}} \log d).$$
Hence 
\begin{equation*}
\mathbb{E}\; \left \|\rho- \frac{\un}{n} \right\|_{\mathcal{S}_0} = \frac{O(d^{\frac{3}{2}} \log d)}{s}.
\end{equation*}

By the concentration of measure technique, for any $t>0$ we have
\begin{equation*}
\mathbb{P} \left(\left \|\rho- \frac{\un}{n} \right\|_{\mathcal{S}_0}  > \mathbb{E}\; \left \|\rho- \frac{\un}{n} \right\|_{\mathcal{S}_0} + t \right) \leq e^{-s^2t^2/n},
\end{equation*}
where we have used the fact that $\rho \rightarrow  \|\rho\|_{\mathcal{S}_0}$ is a $2n$-Lipschitz function 
(see \cite[Lemma 3.4]{ASY2014}) on the real 
sphere $S^{2ns^2-1}.$ If $s^2 >  s_0= d^3 \log^2 d$, then
\begin{equation*}
\mathbb{P} \left(\left \|\rho- \frac{\un}{n} \right\|_{\mathcal{S}_0}  > 1 + t \right) \leq e^{-s^2t^2/n} \leq e^{-d t^2}.
\end{equation*}
Hence the probability that $\rho$ is entangled decays exponentially to $0$ as $d \rightarrow \infty.$ However, since 
$\rho_1$ and $\rho_2$ are PPT entangled, the required parameters should satisfy $s > 4d^2.$ Thus $s^2 > 16d^4 > s_0$ 
for sufficiently large $d,$ which implies $\rho$ is generically separable. 
\end{proof}

\begin{remark}
The above discussion on concentration of measure techniques
is an adaption of S. J. Szarek's lecture \say{Geometric Functional Analysis 
and QIT} \cite{S2017} in the trimester program of the Centre Emile Borel
\say{Analysis in Quantum Information Theory}. \end{remark}


\section{Conclusion}

In this paper, we have studied the random matrix which is obtained by \say{entanglement swapping} protocol of two 
Wishart matrices. By using Gaussian graphical calculus, we are able to obtain some limit theorems of our random 
models, namely, they converge to the Marcenko-Pastur law (resp. semi-circle law) under proper asymptotic regime. 
An interesting application is that we have proved the PPT square conjecture holds generically if we independently chose 
the states.  

Although we hoped to obtain counterexamples and therefore settle the problem, our investigations so far 
tend rather to serve as evidence that the PPT square conjecture might be true (at least, very often, in a natural probabilistic sense). 
We considered many random models, including non-independent models, and but they 
seem to yield similar conclusions, so we focus on a specific natural model where both maps are independent. 

However, we still believe that there might exist counterexample for this conjecture in high dimension. Obviously, to obtain 
the counterexample, there must exist correlation between the chosen states. To this end, we have considered the 
random model obtained by the same states. But unfortunately, it does not strongly converges to a required probability
 distribution, and even worse, the distribution of the model is not unitary invariant. Hence it is not possible to use the 
 existence techniques in the asymptotic geometric analysis.

\vspace{3mm}

 {\it Acknowledgments:} B.C. is supported by JSPS Kakenhi 17K18734, 
 15KK0162 and  17H04823,  and 
 ANR-14-CE25-0003. Z.Y. is supported by NSFC No. 11771106 and No. 11431011.
P.Z. is partially supported by NSFC (No. 11501423, No.
11671308) and NSERC research grant.
P.Z. also wants to thank IASM of the Harbin Institute of Technology for warm hospitality in summer 2016.
The authors are grateful to the IHP for holding the trimester 
\say{Analysis in Quantum Information Theory}
during the fall 2017, that provided a fruitful environment to meet and work on the project of this paper.


\bibliographystyle{plain}

\begin{thebibliography}{1}

\bibitem{A2012} G. Aubrun, Partial transposition of random states and non-centered semicircular distributions, \emph{Random Matrices: Theory Appl.}  \textbf{1}, 1250001 (2012). 

\bibitem{AS2017} G. Aubrun and  S. J. Szarek, {\it Alice and Bob Meet Banach: The Interface of Asymptotic Geometric Analysis and Quantum Information Theory}, Mathematical Surveys and Monographs
Volume: 223, American Mathematical Society, (2017).

\bibitem{ASW2011} G. Aubrun, S. J. Szarek, E. Werner, Hastingss additivity counterexample via Dvoretzky¡¯s theorem, \emph{Comm. Math. Phys.} \textbf{305} (1), 85-97 (2011).

\bibitem{ASY2014} G. Aubrun, S. J. Szarek, and D. P. Ye, Entanglement thresholds for random induced states, \emph{Comm. Pure Appl. Math.} \textbf{67}  (1), 129¨C171 (2014). 

\bibitem{BY1988} Z. D. Bai and Y. Q. Yin, Convergence to the Semicircle Law,  \emph{Ann. Probab.} \textbf{16} 863-875 (1988).

\bibitem{B2010} S. B\"{a}uml, {\it On Bound Key And The Use Of Bound Entanglement}, Diploma thesis,
Ludwig Maximilians Universit\"{a}t M\"{u}nchen, (2010).

\bibitem{BCHW2015} S. B\"{a}uml, M. Christandl, K. Horodecki, A. Winter, Limitations on quantum key repeaters, \emph{Nat. Comm.} \textbf{6}, 6908 (2015).



\bibitem{BBCJPW1993} C. H. Bennett, G. Brassard, C. Crepeau, R. Jozsa, A. Peres,
and W. K. Wootters, Teleporting an unknown quantum state via dual classical and Einstein-Podolsky-Rosen channels, \emph{Phys. Rev. Lett.} \textbf{70}, 1895 (1993).

\bibitem{B1997} P. Biane, Some properties of crossings and partitions, \emph{Discrete Mathematics}, \textbf{175} (1), 41-53 (1997).

\bibitem{BH2010} F. Brandao, M. Horodecki, On Hastings' counterexamples to the minimum output entropy additivity conjecture, \emph{Open Syst. Inf. Dyn.} \textbf{17}, 31 (2010).

\bibitem{BPMEWZ1997} D. Bouwmeester, J. W. Pan, K. Mattle, M. Eibl, H. Weinfurter, and A. Zeilinger, Experimental quantum teleportation
, \emph{Nat.} \textbf{390} (6660), 575-579 (1997).



\bibitem{Choi1975} M. D. Choi, Completely positive linear maps on complex matrices, \emph{Linear Alg. Appl.} \textbf{10}, 285 (1975).

\bibitem{Col2005} B. Collins, Product of random projections, Jacobi ensembles and universality problems arising from free probability, \emph{Probab. Theory Related Fields} \textbf{133} (3), 315-344 (2005). 

\bibitem{CN2010} B. Collins and I. Nechita, Random quantum channels I: Graphical calculus and the Bell state phenomenon, \emph{Comm. Math. Phys.} \textbf{297}, 345-370 (2010).

\bibitem{CN2011} B. Collins and I. Nechita, Random quantum channels II: Entanglement of random subspaces, R\'{e}nyi entropy estimates
and additivity problems, \emph{Adv. Math.} \textbf{226}, 1181-1201 (2011).

\bibitem{CN2012} B. Collins and I. Nechita, Gaussianization and eigenvalue statistics for random quantum channels (III), \emph{Ann. Appl. Prob.} \textbf{21} No. 3, 1136-1179 (2011).

\bibitem{CNY2011} B. Collins, I. Nechita, and D. P. Ye, The absolute positive partial transpose property for random induced states, \emph{Random Matrices: Theory Appl.} \textbf{01} 1250002 (2012).

\bibitem{FN2015} M. Fukuda and I. Nechita, Additivity rates and PPT property for random quantum channels, \emph{Annales math\'{e}matiques Blaise Pascal.} \textbf{22} (1), 1-72 (2015).

\bibitem{GHP2010} A. Grudka, M. Horodecki and {\L}. Pankowski, Constructive counterexamples to the additivity of the minimum output
R\'enyi entropy of quantum channels for all $p>2$, \emph{J. Phys. A: Math. Theor.} \textbf{43}, 425304 (2010).




\bibitem{H2009} M. B. Hastings, Superadditivity of communication capacity using entangled inputs, \emph{Nat. Phys.} \textbf{5}, 255 (2009).

\bibitem{HW2008} P. Hayden and A. Winter, Counterexamples to the maximal $p$-norm multiplicativity conjecture for all $p > 1$, \emph{Comm. Math. Phys.} \textbf{284}, 263 (2008).

\bibitem{HHH1998} M. Horodecki, P. Horodecki, and R. Horodecki, Mixed-state entanglement and distillation: Is there a bound entanglement
in nature, \emph{Phys. Rev. Lett.} \textbf{80}, 5239 (1998).

\bibitem{HHHH2007} R. Horodecki, P. Horodecki, M. Horodecki, K. Horodecki, Quantum entanglement, \emph{Rev. Mod. Phys.} \textbf{81} (2), 865-942 (2007).


\bibitem{J1972} A. Jamio{\l}kowski, Linear transformations which preserve trace and positive semidefiniteness of operators, \emph{Rep. Math. Phys.} \textbf{3}, 275 (1972).

\bibitem{KMP2017} M. Kennedy, N. A. Manor, and V. I. Paulsen, Composition OF PPT Maps, arXiv:1710.08475. 

\bibitem{MM2017} It was claimed by Muller Mermes in the open problem section of the trimester program of the Centre Emile Borel
\say{Analysis in Quantum Information Theory}.

\bibitem{N2007} I. Nechita, Asymptotics of random density matrices, \emph{Ann. Henri Poincare.} \textbf{8}, 1521-1538 (2007).

\bibitem{NS2006} A. Nica and R. Speicher, {\it Lectures on the Combinatorics of Free Probability}, LMS Lecture Note Series, Vol 335, Cambridge University Press, (2006).

\bibitem{RJP2018} M. Rahaman, S. Jaques, and V. I. Paulsen, Eventually Entanglement Breaking Maps, arXiv:1801.05542.

\bibitem{RJKHW2012} M. Ruskai, M. Junge, D. Kribs, P. Hayden, and A. Winter, {\it BanffInternational Research Station workshop: Operator Structures in Quantum Information Theory} (2012). Available at https://www.birs.ca/workshops/2012/12w5084/report12w5084.pdf.

\bibitem{SZ04} H-J Sommers and  K. Zyczkowski, Bures volume of the set of mixed quantum states, \emph{J. Phys. A: Math. Theor.} \textbf{37} (35), 8457 (2003).


\bibitem{S2017} Szarek's lecturenotes for a talk given in the trimester
\say{Analysis in Quantum Information Theory} which is based on the book \cite{AS2017} and the paper \cite{ASY2014}.

\bibitem{ZS01} K. Zyczkowski and H-J Sommers, Induced measures in the space of mixed quantum states, \emph{J. Phys. A: Math. Theor.} \textbf{34} (35), 7111-25 (2001).

\bibitem{ZZHE1993} M. Zukowski, A. Zeilinger, M. A. Horne, and A. K. Ekert, "Event-ready-detectors" Bell experiment via entanglement swapping, \emph{Phys. Rev. Lett.} \textbf{71} (26), 4287-4290 (1993).











\end{thebibliography}

\end{document}